\documentclass{article}

% if you need to pass options to natbib, use, e.g.:
\PassOptionsToPackage{numbers, compress}{natbib}
% before loading neurips_2022

% ready for submission
% \usepackage{neurips_2023}

% to compile a preprint version, e.g., for submission to arXiv, add add the
% [preprint] option:
\usepackage[preprint]{neurips_2023}

% to compile a camera-ready version, add the [final] option, e.g.:
%\usepackage[final]{neurips_2022}

% to avoid loading the natbib package, add option nonatbib:
%\usepackage[nonatbib]{neurips_2022}

\usepackage[utf8]{inputenc} % allow utf-8 input
\usepackage[T1]{fontenc}    % use 8-bit T1 fonts
\usepackage{hyperref}       % hyperlinks
\usepackage{url}            % simple URL typesetting
\usepackage{booktabs}       % professional-quality tables
\usepackage{amsfonts}       % blackboard math symbols
\usepackage{nicefrac}       % compact symbols for 1/2, etc.
\usepackage{microtype}      % microtypography
\usepackage[table,xcdraw]{xcolor}         % colors
\usepackage{amsmath,amsfonts,amssymb,theorem,euscript,array,enumerate,amsfonts,mathrsfs}
\usepackage{array}
\usepackage{float}
\usepackage{bbm}
\usepackage{paralist}
\usepackage{cleveref}
\usepackage{physics}
\usepackage{algorithm}
\usepackage{algorithmic}
\usepackage{graphics} 
\usepackage{graphicx}
\usepackage{multirow}
\graphicspath{{./figures/}}
\usepackage[font=footnotesize,labelfont=bf]{caption}
\usepackage{subcaption}
\usepackage{enumitem,kantlipsum}
\usepackage{wrapfig,lipsum,booktabs}
\usepackage{hyperref}
\hypersetup{    
    urlcolor=cyan,
    pdftitle={Overleaf Example},
    pdfpagemode=FullScreen,
    }
\setlength\abovedisplayskip{0pt}
\setlength\belowdisplayskip{0pt}

\DeclareMathOperator{\argmin}{argmin}

\newcommand\name{\texttt{FedCore}}
\newcommand\fedavg{\texttt{FedAvg}}
\newcommand\fedavgds{\texttt{FedAvg-DS}}
\newcommand\fedprox{\texttt{FedProx}}

\newtheorem{Theorem}{Theorem}[section]

\newtheorem{Assumption}[Theorem]{Assumption}
\newtheorem{Lemma}[Theorem]{Lemma}

\newenvironment{proof}[1][{\it Proof.}]{\begin{trivlist}
\item[\hskip \labelsep {\bfseries #1}]}{ \hfill
$\Box$\end{trivlist}\vskip -0.2 cm}

\def \E{\mathbb{E}}

\def \Lc{{\cal L}}

\def \Oc{{\cal O}}

\def \Wc{{\cal W}}

\title{FedCore: Straggler-Free Federated Learning with Distributed Coresets}

% The \author macro works with any number of authors. There are two commands
% used to separate the names and addresses of multiple authors: \And and \AND.
%
% Using \And between authors leaves it to LaTeX to determine where to break the
% lines. Using \AND forces a line break at that point. So, if LaTeX puts 3 of 4
% authors names on the first line, and the last on the second line, try using
% \AND instead of \And before the third author name.

\author{%
Hongpeng Guo$^{1}$  Haotian Gu$^{2}$ Xiaoyang Wang$^1$ Bo Chen$^1$\\
\textbf{Eun Kyung Lee}$^3$ \textbf{Tamar Eilam}$^3$ \textbf{Deming Chen}$^1$ \textbf{Klara Nahrstedt}$^1$\\
$^1$ UIUC \quad $^2$UC Berkeley \quad $^3$IBM Research\\
\texttt{\{hg5, xw28, boc2, dchen, klara\}@illinois.edu}\\
\texttt{haotian\_gu@berkeley.edu}\\
\texttt{\{eunkyung.lee, eilamt\}@us.ibm.com}
}

\begin{document}

\maketitle

\begin{abstract}
Federated learning (FL) is a machine learning paradigm that allows multiple clients to collaboratively train a shared model while keeping their data on-premise. However, the straggler issue, due to slow clients, often hinders the efficiency and scalability of FL. This paper presents \name{}, an algorithm that innovatively tackles the straggler problem via the decentralized selection of \emph{coresets}, representative subsets of a dataset. Contrary to existing centralized coreset methods, \name{} creates coresets directly on each client in a distributed manner, ensuring privacy preservation in FL. \name{} translates the coreset optimization problem into a more tractable k-medoids clustering problem and operates distributedly on each client. Theoretical analysis confirms \name{}'s convergence, and practical evaluations demonstrate an \textbf{8x} reduction in FL training time, without compromising model accuracy. Our extensive evaluations also show that \name{} generalizes well to existing FL frameworks\footnote{Code: \url{https://github.com/hongpeng-guo/FedCore}}.
% ------------------------------------------------------

% Federated Learning enables collaborative model training across multiple clients while preserving data privacy. However, its efficiency and scalability are often hindered by slow clients, known as stragglers. This paper introduces \name{}, an innovative algorithm designed to tackle this issue. Unlike existing coreset methods for centralized datasets, \name{} operates on each client, creating distributed coresets locally for added privacy. These coresets, representative data subsets, effectively reduce computational load, particularly for slower clients. \name{} dynamically identifies optimal coresets, adapting to evolving model parameters. It integrates into existing FL frameworks seamlessly, transforming coreset optimization into a tractable k-medoids clustering problem. \name{}'s convergence is theoretically confirmed, and evaluations highlight an \textbf{8x} training time reduction, maintaining model accuracy.

\vspace{-8pt}
\end{abstract}
\vspace{-8pt}
\section{Introduction}\label{sec:introduction}

Federated learning (FL) enables multiple clients to collaboratively train a shared machine learning model while retaining their data locally. It has greatly enhanced various privacy-sensitive domains by harnessing the power of AI and providing tailored solutions, including cancer diagnosis \cite{sheller2019multi, jimenez2021memory, li2019privacy}, urban transportation surveillance \cite{liu2020privacy, feng2020pmf}, financial services \cite{xu2019hybridalpha, long2020federated} and beyond. 
Federated learning has given rise to several research areas, including model convergence optimization \cite{mcmahan2017communication, xie2019asynchronous, li2019convergence}, FL system efficiency \cite{kim2021autofl, lai2021oort, guo2022bofl}, privacy preservation \cite{nasr2019comprehensive,bhowmick2018protection}, and robustness against adversarial attacks \cite{lyu2022privacy}. Among these areas, the straggler problem, caused by slow or unresponsive clients, hinders overall training efficiency and scalability. Meta's million-client FL system, Papaya \cite{huba2022papaya}, demonstrated that per-client training time distribution spans over two orders of magnitude, and the round completion time is 21x larger than the average training time per client due to stragglers' delays. Thus, efficient straggler mitigation is vital to unlock FL's full potential across diverse applications.

\paragraph{Motivations.} 

Existing solutions like client selection mechanisms \cite{nishio2019client, lai2021oort, kim2021autofl} and asynchronous frameworks \cite{xie2019asynchronous, nguyen2022federated, huba2022papaya, van2020asynchronous, li2021stragglers, chai2021fedat} aim to mitigate the straggler issue in federated learning (FL). However, these methods inherently treat the symptoms rather than the cause. Client selection can result in biased training \emph{data} due to the exclusion of slower clients, while asynchronous approaches can encounter staleness and inconsistency due to laggard updates from stragglers with slow \emph{hardware}.
These strategies don't  directly address the root cause of the straggler issue, which is due to the \emph{system and data volume heterogeneity} among clients in FL. The disparities in both computational capacity and data volume lead to varied training times, impacting overall efficiency.

Instead of sidestepping this fundamental challenge, our approach confronts it directly by \emph{aligning each client's data volume with its computational capability}. Recognizing that upgrading clients' hardware is impractical, we propose adjusting the amount of data processed by each slow client. These straggler clients often hold more data than can be efficiently processed within the allotted round time. To address this, we propose creating a representative \textbf{coreset}, a compact subset of the full dataset that encapsulates essential learning information. This strategy offers a more precise and direct solution to the straggler problem in FL.

In contrast to existing coreset generation solutions \cite{mirzasoleiman2020coresets, killamsetty2021grad}, where training data are collected on a central server to create a single coreset, we propose a \emph{distributed} approach that forms training coresets on each client independently, maintaining the \emph{privacy} integral to FL. This task is challenging, particularly when dealing with heterogeneous data distribution across numerous clients, each requiring different coreset sizes based on their computational capabilities.
Further complexity arises from the dynamic nature of machine learning models, which is constantly updated during the training process, necessitating the creation of adaptive coresets that can be adjusted according to different model parameters and training phases.
To tackle these issues, we designed \name{} which addresses two key \textbf{questions}:
\begin{enumerate}[leftmargin=*,label= ({\textbf{Q\arabic*}})]
\item How can we select statistically unbiased coresets that adapt to continuously updated models?
\item How to seamlessly integrate coreset generation with minimal overhead into FL frameworks?
\end{enumerate}

\paragraph{Methods and Results.} 
To generate statistically unbiased coresets that adapt to the evolving ML models, we design \name{}, which is applied independently to each client. \name{} operates by periodically searching for a coreset at the start of each FL round, ensuring that the selected coresets may differ between training rounds. This adaptability allows for the provision of the most suitable learning samples, taking into account the varying model parameters at different stages of training (\textbf{Q1}). 
To minimize coreset generation overhead, we employ gradient-based methods that leverage the per-sample gradients obtained during the gradient descent model training. By repurposing these gradients as input for our coreset algorithm, we optimize the use of available resources and eliminate the need for additional computations. Furthermore, we tackle the intricate coreset optimization problem by transforming it into a more manageable k-medoids clustering problem. This transformation allows for a more efficient resolution of the optimization task, streamlining the overall process and minimizing the system overhead (\textbf{Q2}).
Overall, this paper offers the following \textbf{contributions}:

\begin{enumerate}[leftmargin=*,label= ({\textbf{\arabic*}})]
\item We design and implement the \name{} algorithm, a pioneering solution that leverages distributed coreset training to address the straggler problem in FL with minimal system overhead.
\item We provide a theoretical convergence analysis for the \name{} algorithm, which manages to incorporate the coreset gradient approximation error with the federated optimization error, proving that federated model training with per-client coresets results in highly accurate models.
\item We extensively evaluate \name{} against existing solutions and baselines. Evaluation results indicate an 8x reduction in FL training time without degrading model accuracy compared to baseline \fedavg{}. In comparison to \fedprox{}, which handles stragglers through fewer local training epochs, \name{} consistently achieves faster convergence and high model accuracy.
\end{enumerate}

The rest of the paper is organized as follows. We survey related literature in \cref{sec:related_works}. In \cref{sec:preliminaries}, we present the problem setups.  In \cref{sec:system}, we present detailed \name{} algorithms and system framework. We provided convergence analysis for \name{} in \cref{sec:analysis}. \Cref{sec:evaluations} presents the implementation and evaluations of {\name} system.  Finally, \cref{sec:conclusion} concludes the paper.

\vspace{-8pt}
\section{Related Works}\label{sec:related_works}
\paragraph{Coreset Methods for Deep Learning.}
Coreset methods are effective in reducing computational complexity and memory requirements in deep learning. They are based on selecting a representative subset, or coreset, from the original dataset to retain essential information while significantly reducing data size. Coresets have been successfully applied to tasks like image classification \cite{guo2022deepcore, gang2021character, sener2017active}, natural language processing \cite{bilmes2022submodularity,mirzasoleiman2020coresets}, and reinforcement learning \cite{chakraborty2022posterior, huang2021continual, bostani2021strong}. Several approaches for efficient coreset creation include: 
\begin{inparaenum}[1)]
\item \textit{Geometry Based Clustering} \cite{chen2010super, sener2018active, sohler2018strong}, assuming data points in close proximity share similar properties and forming a coreset by removing clustered redundant data points;
\item \textit{Loss Based Sampling} \cite{toneva2018empirical, bachem2015coresets, paul2021deep}, prioritizing training samples based on their contribution to the error or loss reduction during neural network training and selecting the most important samples to form the coreset;
\item \textit{Decision Boundary Methods} \cite{ducoffe2018adversarial, margatina2021active}, focusing on selecting data points near the decision boundary as the coreset, as they carry more informative content for model training; and 
\item \textit{Gradient Matching Solutions} \cite{mirzasoleiman2020coresets, killamsetty2021grad, pooladzandi2022adaptive}, aiming to select a coreset that closely approximates the gradients produced by the full training dataset during deep model training, ensuring minimal gradient differences.
\end{inparaenum}
In this paper, we adopt gradient matching methods to construct distributed coresets across federated learning clients. By utilizing per-sample gradients produced during model training, coresets can be efficiently computed with minimal overhead.

\vspace{-8pt}

\paragraph{Straggler Prevention in Federated Learning.}
Stragglers, slow or unresponsive clients in federated learning, can significantly impact training efficiency and model convergence. Various strategies have been proposed to address this challenge, including:
\begin{inparaenum}[1)]
\item \textit{Client Selection} methods \cite{nishio2019client, lai2021oort, kim2021autofl, nishio2019client} mitigate the impact of stragglers by adaptively selecting a subset of clients based on their performance, training speed, or other criteria. However, this approach may introduce bias in heterogeneous settings, as stragglers with unique and important learning samples could be excluded;
\item \textit{Asynchronous FL} techniques \cite{xie2019asynchronous, nguyen2022federated, huba2022papaya, van2020asynchronous, li2021stragglers, chai2021fedat} eliminate the need for synchronized communication, enabling clients to update local models and communicate with the server independently. Although asynchronous FL can reduce straggler impact, it may suffer from staleness and inconsistency issues affecting the model performance;
\item \textit{Accommodating Partial Work from Stragglers} approaches \cite{li2020federated, li2019feddane, wang2020tackling} adjust local epoch numbers or allow clients to perform partial updates. \fedprox{} \cite{li2020federated} introduces a proximal term in the optimization process, accommodating partial updates without severely affecting model convergence.
\end{inparaenum}
In this paper, we propose \name{}, a novel straggler-resilient training method based on partial-work. Unlike most existing works reducing the number of local epochs, \name{} reduces the number of training samples by creating a coreset. This approach enables \name{} to perform more local optimization steps and explore gradients more deeply, resulting in faster convergence speed and better model accuracy.

\vspace{-8pt}
\section{Preliminaries}\label{sec:preliminaries}

\subsection{Federated Leaning System Setup} 
Consider a set of clients, $U = \{1, 2, \dots, n\}$. For each client $u^i$, we define $V^i = \{1, 2, \dots, m^i\}$ as the index set of its training samples, where $m^i$ represents the size of the training set. The $j$-th data point in the training set of client $u^i$ is denoted as $(x^i_j, y^i_j)$, with $j \in V^i$. $x^i_j$ and $y^i_j$ represent the data and label, respectively. In Federated Learning (FL), the primary objective is to minimize an empirical risk function using the training data from each client. Given a loss function $L$, a machine learning model $f$, and the model parameter space $\mathcal{W}$, the FL problem can be formulated as:
\begin{equation}
\small
\textstyle
w_* = \argmin_{w\in\mathcal{W}}\mathcal{L}(w),
~~\text{where}~~\mathcal{L}(w) := \sum_{i\in U}p^i\mathcal{L}^i(w),
~~ \mathcal{L}^i(w) := \frac{1}{m^i}\sum_{j\in V^i}\mathcal{L}^i_j(w), 
\label{eq:fl_problem}
\end{equation}
Here, $\small \mathcal{L}^i_j(w) := L(f(w, x^i_j), y^i_j)$ represents the empirical loss for each sample $(x^i_j, y^i_j)$, and {\small $p^i = \frac{m^i}{\sum_{i \in U}m^i}$} is the weight proportional to the training set size. However, privacy concerns prevent a central server from directly accessing the clients' data and solving Eq.\eqref{eq:fl_problem}. As an alternative, FL algorithms require each client to solve a local problem, $w^{i,*} = \argmin_{w\in\mathcal{W}}\mathcal{L}^i(w)$, using their data independently. Through iterative communication rounds, the central server aggregates the local models of each client and approximates the solution to Eq.\eqref{eq:fl_problem}.

In FL, clients typically use gradient descent based algorithms like SGD and ADAM for local training. The objective is to provide an unbiased estimate of the full gradient, denoted as {\small $\nabla\mathcal{L}^i(w) = \sum_{j \in V^i}\frac{\partial\mathcal{L}^i_j}{\partial w}$}. SGD optimizers calculate model-gradients based on randomly selected mini-batches of training samples through all the training samples in $V^i$. This constitutes one \emph{epoch} of training. In conventional FL, each client performs SGD for multiple epochs, i.e., $E$ epochs, before sending its gradients to the central server for global model synchronization. This entire process constitutes one FL \emph{round}. The central server then aggregates the received gradients from participating clients and updates the global model. After multiple rounds, i.e., $R$ rounds, of training and synchronization, the global model converges to a satisfactory performance.

The heterogeneity of client training data size and computational capabilities leads to considerable variation in per round training times in Federated Learning. To illustrate, let $c^i$ represent the computational capability of the $i$-th client, which can be inferred from their hardware specifications. Here, $u_i$ takes $1/c^i$ seconds to train one data sample. Hence, the per-round training time is $E\frac{m^i}{c^i}$, where $E$ is the number of epochs per round. Due to the synchronous nature of FL, slower clients can significantly delay the overall training process, resulting in the \emph{straggler problem}.

\subsection{Distributed Coresets for Federated Learning.}
In \name{}, our goal is to address the straggler problem by strategically selecting a small subset $S^i\subseteq V^i$ of the full training set $V^i$ for each $u^i$. This enables the model to be trained only on the subset $S^i$ while still approximately converging to the globally optimal solution (i.e., the model parameters that would be obtained if trained on the entire $V^i$).

Inspired by existing works in gradient-based coreset construction \cite{killamsetty2021grad, mirzasoleiman2020coresets}, the key idea in \name{} is identifying a small subset $S^i$ with the weighted sum of its elements' gradients closely approximating the full gradient over $V^i$. Unlike previous works, our approach generates distributed coresets across all clients $u^i, i\in U$, while still providing global model convergence properties.

To further resolve the straggler problem, we impose a training deadline $\tau$ on every client, ensuring that each $u^i$ can complete one round of training within $\tau$ seconds using the coreset $S^i$. Consequently, $c^i\tau$ represents the maximum number of data samples that can be processed by $u^i$ within a single training round. We specifically formulate the distributed coresets generation problem as follows:
\begin{equation}
\footnotesize
\textstyle
(S^{i,*}, \mathbf{\delta}^{i,*}) = 
\argmin_{S^i \subseteq V^i, \delta^i \in \mathbb{R}^{|S^i|}_+}
\mathcal{E}^i(w, S^i, \delta^i), 
~~\text{s.t.}~~ |S^i| \leq c^i\tau / E,
~~\forall i \in U.
\label{eq:opt_coreset}
\end{equation}
Here {\footnotesize$\mathcal{E}^i(w, S^i, \delta^i):=\norm{\sum_{j\in V^i}\nabla\mathcal{L}^i_j(w) - \sum_{k\in S^i}\delta^i_k\nabla\mathcal{L}^i_k(w)}$} is the 2-normed distance between the full-set gradient and the coreset gradient when the model parameter is $w$. $\delta^i$ is the weight vector of the coreset elements with $\dim(\delta^i)=|S^i|$ .  

Unfortunately, directly solving the aforementioned optimization problem is infeasible due to three main obstacles:
\begin{inparadesc}
\item[a)] Finding the optimal coreset $(S^{i,*}, \mathbf{\delta}^{i,*})$ is an NP-hard task, due to the combinatorial nature of the problem, even when the per-element gradient, $\mathcal{L}^i_j$, can be calculated through SGD training.
\item[b)] Deep machine learning models have high-dimensional model gradients, containing millions of parameters. Solving the above optimization problem with high-dimensional vectors is practically unmanageable.
\item[c)] Eq.\eqref{eq:opt_coreset} needs to be recalculated for every time the model parameter $w$ gets updated, which further intensifies the computational complexity.
\end{inparadesc}
In the following sections, we introduce the design of \name{} and illustrate how it effectively addresses these challenges.

\vspace{-8pt}
\section{\name{} Algorithm and System}\label{sec:system}
\begin{figure}[ht]
    \centering
    \includegraphics[width=\linewidth]{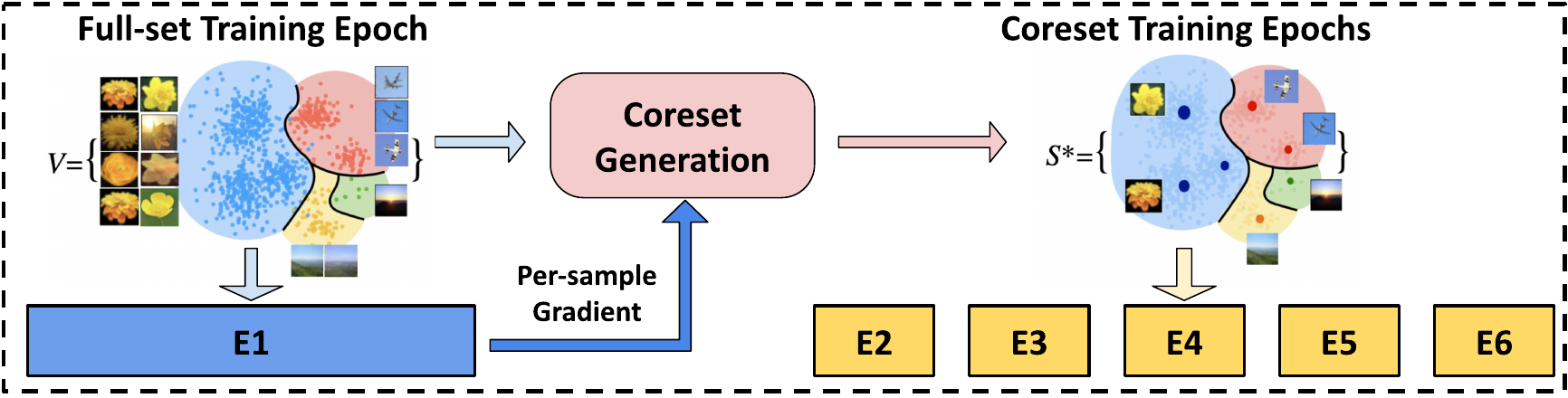}
    \caption{An example workflow of \name{} encompasses a single training round consisting of 6 epochs.}
    \label{fig:system_overview}
\end{figure}

\begin{algorithm}[ht]
\begin{algorithmic}[1]
\footnotesize
    \caption{\name{} Algorithm}
    \label{alg:fedcore_concise}	
    \STATE {\bf Input:}  $K$: {\textit{\# selected clients per round}}. \hspace{4pt}
    $R$: {\textit{\# training rounds}}.  \hspace{4pt} $w_0$: {\textit{\# initial model parameter}}.\\
    {~~~~~~~~~~~} $E$, $\tau$, and $V^i$, $c^i$, $m^i$ $p^i$ for all $i\in U$, as they are defined in \cref{sec:preliminaries}.\\
    \FOR  {$r=0,1, \cdots, R$}
            \STATE Server randomly selects a subset of $K$ clients $U_\text{r}$. Each $u^i$ is chosen with probability $p^i$.\label{line:client_selection} 
            \STATE Server sends current model $w_r$ and round deadline $\tau$ to all chosen clients $u^i, i\in U_\text{r}$.\label{line:send_model} 
            \FOR {\textbf{each} $i \in U_\text{r}$}
                \IF {$E\cdot m^i < c^i\tau$} 
                    \STATE Client $u^i$ executes $E$ epochs of local training with its full-set $V^i$. \label{line:non_straggler}
                \ELSE
                    \STATE Client $u^i$ generates approximated gradient distance, either $\widetilde{d}^i_{j,k}$ or $\widehat{d}^i_{j,k}$ for convex models\\  
                    and neural networks, respectively, over the full-set $V^i$ in the first epoch (\cref{subsec:gradient_approx}). \label{line:straggler_start}
                    \STATE Client $u^i$ constructs coreset $(S^{i,*}, \delta^{i,*})$ by solving the k-medoids problem Eq.\eqref{eq:opt_medoids}.
                    \STATE Client $u^i$ executes $E-1$ epochs of local training with its coreset $(S^{i,*}, \delta^{i,*})$. \label{line:straggler_end}
                \ENDIF
            
            \STATE Client $u^i$ sends its round-end local parameter $w^i_r$ back to the server. \label{line:send_back}
            \ENDFOR
    \STATE Server aggregates the new global model: $w_{r+1} = \frac{1}{K}\sum_{i \in U_{r}} w^i_{r}$. \label{line:aggregate}
    \ENDFOR
\end{algorithmic}
\end{algorithm}

\subsection{\name{} Algorithm  Overview.} 
We present the \name{} workflow in Algorithm \ref{alg:fedcore_concise}. \name{} operates in multiple training rounds, denoted by $R$. Like most existing works \cite{li2020federated}, the server selects $K$ clients randomly, with probabilities proportional to their training set size, i.e., {\small $p^i = \frac{m^i}{\sum_{i \in U}m^i}$} (line \ref{line:client_selection}). The server sends the current model parameter and round deadline $\tau$ to the selected clients for distributed training (line \ref{line:send_model}). Clients assess if they can complete full-set training within $\tau$. If possible, they execute $E$ epochs of SGD training over its full-set $V^i$ (line \ref{line:non_straggler}). Otherwise, they generate a training coreset and train using it (line \ref{line:straggler_start} - \ref{line:straggler_end}). Finally, clients send their local parameters to the server at the end of each training round, which aggregates them to form a new global model (line \ref{line:aggregate}).

To circumvent the need to solve Eq.\eqref{eq:opt_coreset} for every different model parameter $w$ (i.e., every epoch), we design \name{} to search for a suitable coreset periodically at the beginning of each FL round. Figure \ref{fig:system_overview} illustrates the workflow of \name{} during one FL round. In the first epoch, \name{} processes the entire training set, taking a comprehensive initial optimization step and generating per-sample gradients for coreset creation. For the remaining epochs, \name{} operates on a coreset, significantly reducing training time and mitigating the effects of stragglers.

By minimizing the upper bound of gradient estimation dissimilarity (i.e., $\mathcal{E}^i$), we transform Eq.\eqref{eq:opt_coreset} into a \emph{k-medoids problem}, which can be solved approximately in polynomial time (\cref{subsec:kmedoids}). We also use low-dimensional gradient approximations as input for the coreset algorithm instead of high-dimensional model gradients, making coreset generation more efficient and lightweight (\cref{subsec:gradient_approx}). The distributed coresets generated through our approach provide strong global convergence guarantees (\cref{sec:analysis}). In the following sections, we detail the design of these algorithms and discuss practical techniques to accelerate \name{}.

\subsection{Upper Bounding Dissimilarity Estimation with K-Medoids}\label{subsec:kmedoids}
We aim to construct a coreset $S^i$ with $b^i$ elements for each client $u^i$. In order to allow for the first epoch of every training round to be full-set with $m^i$ training elements, we set $b^i = \lfloor \frac{c^i\tau - m^i}{E-1}\rfloor$ to meet the computational capability $c^i\tau - m^i$  of $u^i$ in the remaining $E-1$ epochs. 

To upper bound the dissimilarity between the full-set gradient
and the weighted coreset gradient on $S^i$, first consider a mapping function $\Phi^i: V^i \rightarrow S^i$ that, for every possible model parameter $w \in \mathcal{W}$, assigns each data point $j \in V^i$ to one of its coreset elements $k \in S^i$, i.e., $\Phi^i(j)=k \in S^i$. Let $C^i_k:=\left\{j: \Phi^i(j)=k\right\}\subseteq V^i$ represent the set of data points assigned to data point $k\in S^i$, and let $\delta^i_k:=|C^i_k| \in \mathbb{N}_+$ denote the number of such points. Thus, for any arbitrary $w \in \mathcal{W}$, we have 
{$$\textstyle\sum_{j\in V^i}\nabla\mathcal{L}^i_j(w) - \sum_{k\in S^i}\delta^i_k\nabla\mathcal{L}^i_k(w)= \sum_{j\in V^i}(\nabla\mathcal{L}^i_j(w) - \nabla\mathcal{L}^i_{\Phi^i(j)}(w))$$}
By applying the triangle inequality on both sides, we derive an upper bound for the normed error between the full-set gradient and the weighted coreset gradient, i.e.,
\begin{equation}
\small
\mathcal{E}^i(w, S^i, \delta^i)=
{\textstyle\norm{\sum_{j\in V^i}\nabla\mathcal{L}^i_j(w) - \sum_{k\in S^i}\delta^i_k\nabla\mathcal{L}^i_k(w)}} 
\leq \sum_{j\in V^i}\norm{\nabla\mathcal{L}^i_j(w) - \nabla\mathcal{L}^i_{\Phi^i(j)}(w)}.
\label{eq:upper_bound}
\end{equation}
Note that the upper bound in Eq.\eqref{eq:upper_bound} is minimized when $\Phi^i$ assigns every data point $j \in V^i$ to the element $k \in S^i$ with the most similar gradient, i.e., $\Phi^i(j)=\argmin_{k\in S^i}d^i_{j,k}(w)$, where $d^i_{j,k}(w)=\norm{\nabla\mathcal{L}^i_j(w) - \nabla\mathcal{L}^i_k(w)}$. Hence,
\begin{equation} 
\small
\textstyle
\min_{S^i\subseteq V^i, \delta^i\in \mathbb{N}^{|S^i|}_+}
\mathcal{E}^i(w, S^i, \delta^i) \leq \min_{S^i\subseteq V^i}\left\{
\sum_{j\in V^i}\min_{k\in S^i}d^i_{j,k}(w)\right\}.
\label{eq:upper_bound_min}
\end{equation}
Recall that the minimum value of Eq.\eqref{eq:opt_coreset} is further upper bounded by the left hand side of Eq.\eqref{eq:upper_bound_min}, as it has a larger feasible set for its weight vector $\delta^i\in \mathbb{R}^{|S^i|}_+$.  Hence, we can adjust the optimization objective of Eq.\eqref{eq:opt_coreset} to the right hand side gradient dissimilarity upper bound as follows:
\begin{equation}
\small
\textstyle
(S^{i,*}, \delta^{i,*}) = \argmin_{S^i\subseteq V^i}\left\{\sum_{j\in V^i}\min_{k\in S^i}d^i_{j,k}(w)\right\}, 
~~\text{s.t.}~~ |S^i| \leq b^i,
~~\forall i \in U,
\label{eq:opt_medoids}
\end{equation}
where $\delta^i \in \mathbb{N}^{|S^i|}_+$ is the weight vector associated with $S^i$, given by 
$$\small\delta^{i,*}_k=\left|\left\{j\in V^i:k=\argmin_{l\in S^{i,*}} d^i_{j,l}(w)\right\}\right|.$$
Note that Eq.\eqref{eq:opt_medoids} is a \emph{k-medoids problem} with a budget size of $b^i$. The goal is to minimize the objective function by finding the $b^i$ medoids of the entire training set in the gradient space.

% The \emph{k-medoids problem} is a clustering technique forming $k$ clusters based on data point similarities. K-medoids use actual data points as cluster centers, i.e., the medoids, minimizing dissimilarities between data points and their respective medoids.
% These medoids form a coreset for our Federated Learning (FL) problem. Various algorithms \cite{mirzasoleiman2020coresets, kaufman2009finding, sheng2006genetic} have been developed to address this problem, each offering different computational efficiency and clustering quality trade-offs.
% For our specific problem, we opt for the FasterPAM algorithm \cite{schubert2019faster}, which delivers both speed and accuracy in pinpointing optimal medoids, enabling us to effectively minimize the Eq.\eqref{eq:opt_medoids}. In practice, FasterPAM can quickly solve the k-medoids problem and generate coresets for datasets with thousands of samples within a second.

The \emph{k-medoids problem} is a clustering technique forming $k$ clusters based on data point similarities. K-medoids use actual data points as cluster centers, i.e., the medoids, minimizing dissimilarities between data points and their respective medoids. These medoids form a coreset for our Federated Learning problem. Multiple algorithms \cite{mirzasoleiman2020coresets, kaufman2009finding, sheng2006genetic} have been proposed for this problem, offering diverse computational efficiency and quality trade-offs. In our case, we employ the FasterPAM algorithm, known for its speed and accuracy in identifying optimal medoids, efficiently minimizing our equation Eq.\eqref{eq:opt_medoids}. In essence, FasterPAM quickly solves the k-medoids problem, generating coresets for large datasets within one second.

\subsection{Accelerating Coreset Generation with Gradient Approximation.}\label{subsec:gradient_approx}
Solving the k-medoids problem for each $w\in\mathcal{W}$, as illustrated in Eq.\eqref{eq:opt_medoids}, requires calculating every pairwise gradient difference for the entire training set (i.e., $d^i_{j,k}(w), \forall j,k \in V^i$). Nonetheless, directly computing the gradient-distances is computationally costly due to the typically high-dimensional nature of the full model gradient, especially in the case of deep neural networks with millions of parameters. This leads to a computationally burdensome k-medoids clustering process.
Following the approach in \cite{mirzasoleiman2020coresets}, we tackle this challenge by substituting the full gradient differences with lightweight approximations for two general types of machine learning models as below.

\paragraph{Convex Machine Learning Models.} 

We utilize the method from \cite{allen2017katyusha} that allows for effective gradient distance approximation in convex machine learning models like linear regression, logistic regression, and regularized SVMs. This method approximates the gradient difference between data points using their Euclidean distance, a principle that uniformly applies across the entirety of the parameter space, $\mathcal{W}$. By substituting $d^i_{j,k}(w)$ with $\widetilde{d}^i_{j,k}(w) = \norm{x^i_j - x^i_k}$ in Eq.\eqref{eq:opt_medoids}, the coreset problem is reframed into a 2-norm k-medoids clustering within the original data space. This adjustment facilitates coresets formation using pre-calculated pairwise Euclidean distances, eliminating per-round generation and reducing training-time cost.

\paragraph{Deep Neural Networks.} 
In deep neural networks, gradient changes primarily reflect the loss function's gradient relative to the last layer's input \cite{katharopoulos2018not}. The normed differences of gradients between data points can be effectively bounded as below:
\begin{equation*}
\footnotesize
\textstyle \forall i, j, k,~~~~
d^i_{j,k}(w) \leq \widehat{d}^i_{j,k}(w) = 
c_1 \cdot \norm{
\partial\mathcal{L}^i_j(w) / \partial z^{i}_j-\partial\mathcal{L}^i_k(w) / \partial z^{i}_k}
+ c_2,
\label{eq:nn_bound}
\end{equation*}
Here, $z^i_j$ is the input to the last neural network layer from data point $x^i_j$, and $c_1$ and $c_2$ are constants. We substitute $d^i_{j,k}$ with $\widehat{d}^i_{j,k}$ in Eq.\eqref{eq:opt_medoids} for optimization. $\footnotesize\norm{\partial\mathcal{L}^i_j(w) / \partial z^i_j}$ is attainable from the first epoch of full-set training and requires no extra computation. In \name{}, we derive $\widehat{d}^i_{j,k}$ for all pairs $j, k\in S^i$ in the first FL epoch, thus alleviating the load of high-dimensional k-medoids clustering.

\subsection{Discussions of Design Choices.}\label{subsec:system_discussion}
In designing \name{}, we intentionally set the first epoch to train on the entire dataset, generating (approximated) per-sample gradients for k-medoids coreset generation.
However, heavy loaded straggling clients may struggle to complete the initial epoch\footnote{Existing solutions like \fedprox{} also fail in extreme cases.}, i.e., $c^i\tau < m^i$. 
In such cases, \name{} can use faster coreset methods not requiring a full epoch of forward and backward propagation. As explained in \cref{subsec:gradient_approx}:
\begin{inparadesc}
\item[a)] Convex FL models can use static coresets to achieve model convergence and train with pre-computed coresets in any epoch, i.e., calculate corsets with  pre-computed $\widetilde{d}^i_{j,k}$;
\item[b)] Deep neural networks compute approximated pairwise gradient distance, $\widehat{d}^i_{j,k}$, which is attainable almost as cheap as calculating the loss (with only one step of gradient calculation for the last layer input), instead of a full epoch of forward and backward propagation. 
\end{inparadesc}
As long as the training deadline allows, \name{} prefers to retain the initial full-set epoch, since it offers a more comprehensive representation of the training status by utilizing the entire dataset and establishing a more accurate, well-informed step in beginning of each round of model training.
\vspace{-8pt}
\section{Convergence Analysis}\label{sec:analysis}
The convergence of \name{} is established for strongly convex functions $\Lc$ under mild assumptions. It is important to note that most existing works on the convergence analysis of federated learning (e.g., \cite{haddadpour2019convergence, li2020federated, li2019convergence}) assume that local gradient estimations at the client level are unbiased since the data is directly sampled from the full-set. However, in \name{}, gradients computed from coresets are biased approximations to full-set gradients. As a result, the main technical contribution of our convergence analysis is to meticulously incorporate the coreset gradient approximation error with the federated optimization error. 

\begin{Theorem}\label{thm:main_conv}
    Assume that for any $i\in U$, the loss function $\Lc^i$ is $L$-smooth and $\mu$-strongly convex, and the coreset $\left(S^{i,*}, \delta^{i,*}\right)$ constructed in \name{} is an $\epsilon$-approximation to the full-set, i.e.
    \begin{equation}
        \footnotesize
        \forall w\in\Wc, \quad \frac{1}{m^i}\norm{\sum_{j\in V^i}\nabla\mathcal{L}^i_j(w) - \sum_{k\in S^{i,*}}\delta^{i,*}_k\nabla\mathcal{L}^i_k(w)}\leq\epsilon, 
        % \quad \text{where }\delta^{i,*}=\left(\gamma^i_k\right)_{k\in S^{i,*}}.
    \end{equation}
    \noindent Consider \name{} with $R$ rounds with each round containing $E$ epochs. Set the learning rate 
    $\eta_t=\Omega(1/t)$ for $t\in\{1,2,\cdots,ER\}$.
    The model $w_\text{out}$ output by \name{} after $R$ rounds satisfies
    $$\E\left[\Lc(w_\text{out})-\Lc(w_*)\right] \leq \Oc(\epsilon) + \Oc(1/R),$$
    where $w_*=\argmin_{w\in\Wc}\Lc(w)$ is the global optimum of $\Lc$ in Eq.\eqref{eq:fl_problem}, and the expectation is taken over the randomness in client selection, coreset construction and model initialization.
\end{Theorem}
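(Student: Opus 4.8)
The plan is to adapt the standard FedAvg-style convergence argument for $L$-smooth, $\mu$-strongly convex objectives (as in \cite{li2019convergence, li2020federated}), but to carry an extra bias term through the entire analysis so that the coreset approximation surfaces as an additive $\Oc(\epsilon)$ floor rather than degrading the $\Oc(1/R)$ optimization rate. The starting observation is that the theorem's hypothesis is exactly a uniform bias bound: since $\nabla\Lc^i(w)=\frac{1}{m^i}\sum_{j\in V^i}\nabla\Lc^i_j(w)$ and the normalized coreset gradient is $\frac{1}{m^i}\sum_{k\in S^{i,*}}\delta^{i,*}_k\nabla\Lc^i_k(w)$, the $\epsilon$-approximation assumption says precisely that the coreset gradient equals $\nabla\Lc^i(w)+b^i(w)$ with $\norm{b^i(w)}\le\epsilon$ for all $w\in\Wc$. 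I would first record that within each round the first epoch trains on the full set (bias zero) and the remaining $E-1$ epochs train on the coreset (bias of norm $\le\epsilon$), so every local gradient step is the true local gradient plus a deterministic bias of norm at most $\epsilon$ plus zero-mean mini-batch noise.

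Next I would introduce the usual virtual averaged iterate $\bar w_t=\frac{1}{K}\sum_{i\in U_r}w^i_t$ and track $a_t:=\E\,\norm{\bar w_t-w_*}^2$ across the $ER$ steps. Expanding the one-step update and taking expectation over client sampling (which, with selection probability $p^i$, makes the aggregate an unbiased estimator of $\nabla\Lc=\sum_i p^i\nabla\Lc^i$), over mini-batch noise, and over coreset construction, I would apply $\mu$-strong convexity to the term $\langle\nabla\Lc(\bar w_t),\bar w_t-w_*\rangle$ and $L$-smoothness to control the client-drift introduced by multiple local steps per round, exactly as in the baseline proofs. The one genuinely new ingredient is the cross term $\langle b_t,\bar w_t-w_*\rangle$ from the aggregated bias $b_t$ with $\norm{b_t}\le\epsilon$; I would bound it by $\epsilon\,\norm{\bar w_t-w_*}\le\epsilon D$ using a diameter/bounded-iterate bound $D$ on $\Wc$, keeping the dependence first order in $\epsilon$.

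This yields a recursion of the shape
$$a_{t+1}\le(1-\mu\eta_t)\,a_t+\eta_t^2 B+\eta_t\,C\epsilon,$$
where $B$ collects the bounded variance and drift constants and $C$ absorbs $D$. I would then close the recursion with the diminishing schedule $\eta_t=\Omega(1/t)$ via the standard inductive lemma for sequences of this form: the $\eta_t^2 B$ term produces the $\Oc(1/R)$ optimization rate, while the $\eta_t C\epsilon$ term produces a non-vanishing floor of order $\epsilon$. Converting the resulting bound on $a_R$ into a function-value gap through $L$-smoothness gives $\E[\Lc(w_\text{out})-\Lc(w_*)]\le\Oc(\epsilon)+\Oc(1/R)$, as claimed.

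The main obstacle I anticipate is the bias bookkeeping. Ensuring that the coreset bias, which is present only in $E-1$ of every $E$ local steps and which also interacts with the client-drift terms through smoothness, accumulates to exactly an additive $\Oc(\epsilon)$ and does not contaminate the $1/R$ rate is precisely where the analysis departs from the unbiased-gradient FL templates. The delicate point is to keep the bias term \emph{linear} throughout: applying Young's inequality to absorb $\langle b_t,\bar w_t-w_*\rangle$ into the strong-convexity term would instead produce an $\epsilon^2/\mu$-type constant and entangle the bias with the variance bound, so I would prefer the bounded-iterate estimate $\epsilon D$ to preserve the clean $\Oc(\epsilon)$ dependence stated in the theorem.
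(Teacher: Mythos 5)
Your proposal is correct and follows essentially the same route as the paper's proof: the paper likewise tracks an averaged virtual iterate, uses unbiasedness and bounded variance of client sampling, keeps the coreset bias \emph{linear} as a term $\eta_t\cdot\frac{2\epsilon D}{\mu}$ (deriving the iterate bound $\|\overline{w}_t-w_*\|\leq D/\mu$ from strong convexity and bounded gradients, in place of your diameter assumption), and arrives at exactly your recursion $a_{t+1}\leq(1-\mu\eta_t)a_t+\eta_t A_3+\eta_t^2 A_5$, closed by induction under $\eta_t=\Theta(1/t)$ so that the linear bias term becomes the $\Oc(\epsilon)$ floor $A_1=2\epsilon D/\mu^2$ and $L$-smoothness converts the distance bound to a function-value gap. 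The only differences are cosmetic bookkeeping (the paper's virtual sequence averages over all clients via a "thought trick," and its analysis treats all $E$ epochs as coreset steps), not a different argument.
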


The comprehensive collections of the technical assumptions and the detailed statement of Theorem \ref{thm:main_conv} can be found in \cref{subapp:assumptions} and \cref{subapp:theorem}. The bound in Theorem \ref{thm:main_conv} indicates that \name{} converges to the global optimum at the rate $\Oc(1/R)$, with an additional cost of $\Oc(\epsilon)$ attributed to the coreset gradient approximation error. It is worth noting that the rate $\Oc(1/R)$ aligns with the existing convergence results for federated learning \cite{haddadpour2019convergence, li2020federated, li2019convergence}.
The trade-off between full-set FL and coreset FL is explicitly characterized in Theorem \ref{thm:main_conv}. While learning on the full-set may circumvent the gradient approximation error, the straggler problem in full-set FL can lead to a small number of training rounds $R$ under a limited time budget. On the other hand, \name{} reduces the impact of the straggler problem and allows for more training rounds to achieve a smaller optimization error $\Oc(1/R)$, while keeping the gradient approximation error low (only $\Oc(\epsilon)$), enabling both efficient and accurate optimization.
The proof of Theorem \ref{thm:main_conv} is deferred to \cref{subapp:theorem}.

\vspace{-8pt}
\section{Evaluations}\label{sec:evaluations}

\subsection{Experimental Setups}\label{subsec:methodology}

\paragraph{FL Datasets and Benchmarks}
\begin{figure}[ht]
\begin{minipage}{0.4\linewidth}
\centering
\captionsetup{type=table} %% tell latex to change to table
\scriptsize
\begin{tabular}{m{1cm}m{0.6cm}m{0.8cm}m{0.5cm}m{0.6cm}}
\hline\hline
\multirow{2}{*}{\textbf{Dataset}} & \multirow{2}{*}{\textbf{Clients}} & \multirow{2}{*}{\textbf{Samples}} & \multicolumn{2}{c}{\textbf{Samples /  Client}} \\ \cline{4-5} 
           &       &         & mean  & std   \\ \hline
MNIST      & 1,000 & 69,035  & 69    & 106   \\
Shakespare & 143   & 517,106 & 3,616 & 6,808 \\
Synthetic  & 30    & 20,101  & 670   & 1,148 \\ \hline\hline
\end{tabular}
\caption{Statistics of the benchmarks}\label{tab:dataset_stats}
\end{minipage}
\hspace{1em}
\begin{minipage}{0.54\linewidth}
\centering
\includegraphics[width=\linewidth]{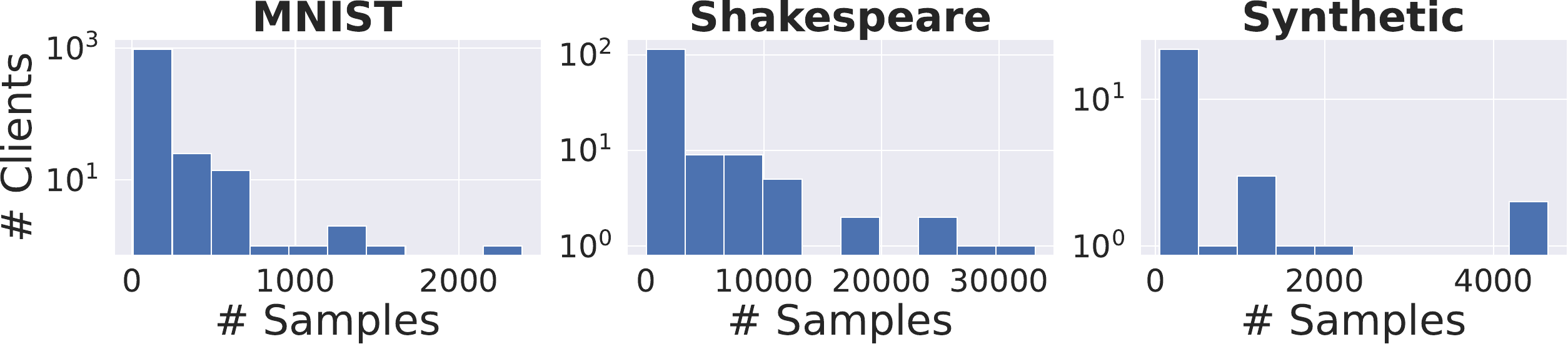}
\caption{Distribution of training samples per client}\label{fig:dataset_stats}
\end{minipage}
\end{figure}
We assess \name{} using three widely recognized federated learning benchmarks from computer vision, natural language processing, and feature-based classification domains. These tasks encompass major machine learning model types, including CNN, RNN, and Logistic Regression (LR), as detailed below:
\begin{inparadesc}
\item[1)] \textbf{MNIST Dataset} \cite{lecun2010mnist}: This dataset features a digit classification task using a three-layer CNN for training. To create statistical heterogeneity, the data is allocated among 1,000 clients, where each client has samples of just two distinct digits. The quantity of samples per client adheres to a power-law distribution, highlighting the diversity among clients.
\item[2)]  \textbf{Shakespeare Dataset} \cite{mcmahan2017communication}: This dataset represents a next-character prediction task trained on \emph{the Complete Works of William Shakespeare} using an LSTM model. Each of the 143 speaking roles in the plays is associated with a distinct client. And  
\item[3)] \textbf{Synthetic Dataset} \cite{li2020federated}: This dataset involves a feature-based classification task with 30 clients training an LR model. Each client's training data is generated from a random function $G(\alpha, \beta)$, where $\alpha$ and $\beta$ control the cross-client and within-client data heterogeneity. Following the approach in \cite{li2020federated}, we evaluate our method with three different parameter settings: $(\alpha, \beta)$ equals to $(0,0)$, $(0.5,0.5)$ and $(1,1)$, respectively.
\end{inparadesc}
In our evaluation, we train MNIST, Shakespare and Synthetic benchmarks for 100, 30 and 100 rounds, respectively. For all three tasks, each round comprises 10 local epochs. Detailed statistics for these three datasets can be found in Table \ref{tab:dataset_stats} and Figure \ref{fig:dataset_stats}.

\paragraph{Comparision Baselines}
We compare \name{} with the following three baselines.
\begin{enumerate}[leftmargin=*, label= {\textbf{\alph*}})]
\item \fedavg{} \cite{mcmahan2017communication} updates the global model by averaging local model updates from participating clients. However, it does not consider training deadlines, and thus, is prone to the stragglers issue.
\item \fedavgds{} \cite{mcmahan2017communication} is a variant of \fedavg{} enforces training deadlines for each round by excluding stragglers. This strategy, however, may negatively impact its overall training performance.
\item \fedprox{} \cite{li2020federated} is designed to handle partial results from stragglers that might complete fewer local epochs than anticipated, \fedprox{} incorporates a quadratic proximal term that explicitly limits the magnitude of local model updates to accommodate stragglers.
\end{enumerate}

\paragraph{Implementations}
We develop \name{} along with all the baseline algorithms using PyTorch \cite{paszke2019pytorch}, extending the simulation framework proposed in FedML\cite{he2020fedml}. For each client $u^i$, we sample its computational capability from a normal distribution, i.e., $c^i\sim \mathcal{N}(1,\,0.25)$. As discussed in \cref{sec:preliminaries}, the per-round training time for a client is proportional to $\frac{m^i}{c^i}$. To emulate the stragglers problem, we designate the slowest $s\%$ of clients as stragglers by setting a per-round training deadline that these clients cannot complete all their training tasks within the allotted time. When the training deadline is reached, \fedavgds{} simply excludes all stragglers and aggregates a global model using the non-stragglers' gradients. In contrast, \fedprox{} and \name{} employ different strategies such as reducing local training epochs or training with coresets. In our evaluation, we consider two different stragglers' settings by choosing $s$ to be 10 and 30, respectively. A more detailed implementation and hyper-parameters for the evaluation are presented in \cref{app:evaluation}.

\subsection{Evaluation Results}\label{subsec:results}

\begin{figure}[ht]
    \centering
    \includegraphics[width=\linewidth]{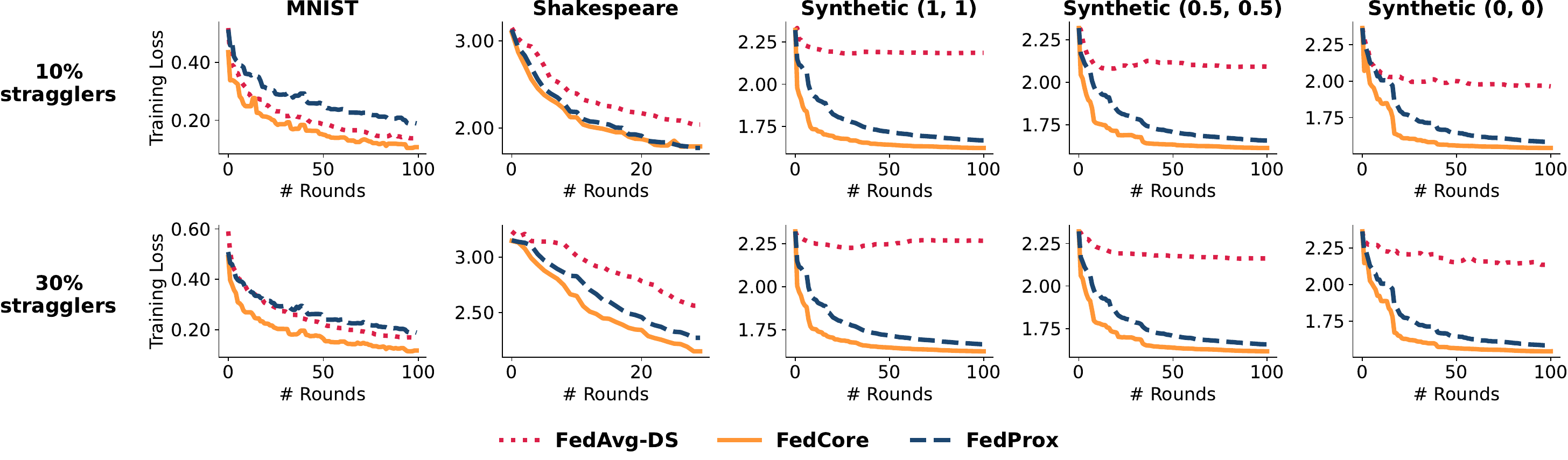}
    \caption{The training loss curves for \fedavgds{}, \name{}, and \fedprox{} at 10\% and 30\% stragglers.}
    \label{fig:training_loss}
\end{figure}

\begin{table}[ht]
\centering
\scriptsize
\begin{tabular}{cccccccccccc}
\hline\hline
 &  & \multicolumn{2}{c}{\textbf{MNIST}} & \multicolumn{2}{c}{\textbf{Shakespeare}} & \multicolumn{2}{c}{\textbf{Synthetic (1, 1)}} & \multicolumn{2}{c}{\textbf{Synthetic (0.5, 0.5)}} & \multicolumn{2}{c}{\textbf{Synthetic (0, 0)}} \\ \cline{3-12} 
\textbf{} &  & 10\% & 30\% & 10\% & 30\% & 10\% & 30\% & 10\% & 30\% & 10\% & 30\% \\ \hline\hline
 & FedAvg & \multicolumn{2}{c}{94.7} & \multicolumn{2}{c}{44.9} & \multicolumn{2}{c}{71.8} & \multicolumn{2}{c}{73.7} & \multicolumn{2}{c}{88.2} \\ \cline{3-12} 
 & FedAvg-DS & 94.1 & 93.1 & 39.0 & 25.2 & 23.0 & 19.9 & 32.2 & 23.6 & 36.3 & 34.6 \\
 & FedProx & 92.6 & 92.7 & 44.1 & 31.3 & \textbf{72.3} & 72.2 & 74.1 & 74.1 & 87.2 & 87.2 \\
\multirow{-4}{*}{\begin{tabular}[c]{@{}c@{}}Test \\ Accuracy\end{tabular}} & \textbf{FedCore} & \textbf{94.6} & \textbf{94.5} & \textbf{44.7} & \textbf{34.8} & 72.2 & \textbf{72.8} & \textbf{75.2} & \textbf{75.1} & \textbf{88.5} & \textbf{88.3} \\ \hline\hline
 & FedAvg & {\color{red} \textbf{3.27}} & {\color{red} \textbf{8.48}} & {\color{red} \textbf{1.38}} & {\color{red} \textbf{4.09}} & {\color{red} \textbf{1.37}} & {\color{red} \textbf{4.80}} & {\color{red} \textbf{1.37}} & {\color{red} \textbf{4.80}} & {\color{red} \textbf{1.37}} & {\color{red} \textbf{4.80}} \\ \cline{3-12} 
 & FedAvg-DS & 0.94 & 0.95 & 0.60 & 0.67 & 0.69 & 0.79 & 0.69 & 0.79 & 0.69 & 0.79 \\
 & FedProx & 0.98 & 0.99 & 0.85 & 0.94 & 0.86 & 0.95 & 0.86 & 0.95 & 0.86 & 0.95 \\
\multirow{-4}{*}{\begin{tabular}[c]{@{}c@{}}Mean Training\\ Time per Round\\ (normalized)\end{tabular}} & \textbf{FedCore} & 0.99 & 0.99 & 0.90 & 0.99 & 0.93 & 0.99 & 0.93 & 0.99 & 0.93 & 0.99 \\ \hline\hline
\end{tabular}
\vspace{3pt}
\caption{Comparison of test accuracy and training time for \fedavg{}, \fedavgds{}, \fedprox{}, and \name{} at 10\% and 30\% stragglers. \textbf{Bold}: top accuracy; {\color{red} \textbf{Red}}: exceeded deadline. Normalized time of 1 is round deadline.}
\vspace{-10pt}
\label{tab:acc-time}
\end{table}

\paragraph{Model Performance}
We present the training loss curves in Figure \ref{fig:training_loss} and model accuracy, along with normalized training time, in Table \ref{tab:acc-time}. More evaluation results are depicted in \cref{app:evaluation}.
For model training loss, \name{} consistently achieves the fastest convergence speed and yields the lowest model loss. In contrast, \fedavgds{} struggles to converge well under synthetic benchmarks due to its approach of dropping stragglers, which contain unique training samples essential for learning. \fedprox{} presents competitive performance, but with slower convergence and higher loss compared to \name{}.
Concerning test accuracy, \name{} consistently achieves the highest or near-highest values across all datasets and stragglers' settings, highlighting its superior performance in maintaining or improving model accuracy even with stragglers. \fedprox{} also demonstrates competitive performance, owing to its ability to accommodate partial results from stragglers, which contain a significant amount of unique training samples that improve model accuracy. However, \fedavgds{} often results in lower accuracy, particularly in the 30\% stragglers setting, as its approach of dropping straggler clients negatively impacts training performance.
In terms of training time, \name{}, \fedprox{}, and \fedavgds{} are deadline-aware, ensuring they do not exceed the round deadlines. While \name{} does not always achieve the fastest training time, it strikes a balance between efficiency and maintaining high accuracy. Conversely, \fedavg{} exhibits the longest training times, indicated in red, showcasing its vulnerability to stragglers and lack of deadline-awareness.

\begin{figure}[ht]
 \centering
 \begin{minipage}[b]{0.46\textwidth}
 \centering
 \includegraphics[width=\textwidth]{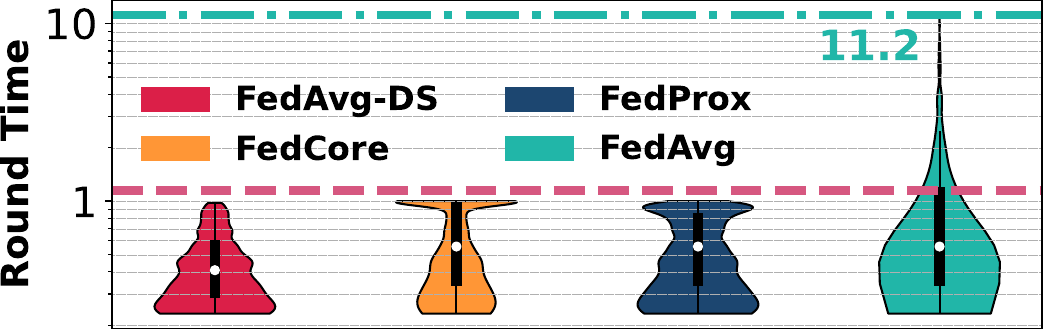}
 \caption{Round length distribution on MNIST benchmark, 30\% stragglers. The y-axis is presented in log-scale for better illustration.}
 \label{fig:violin_minist}
 \end{minipage}
 \hfill
 \begin{minipage}[b]{0.48\textwidth}
 \centering
 \includegraphics[width=\textwidth]{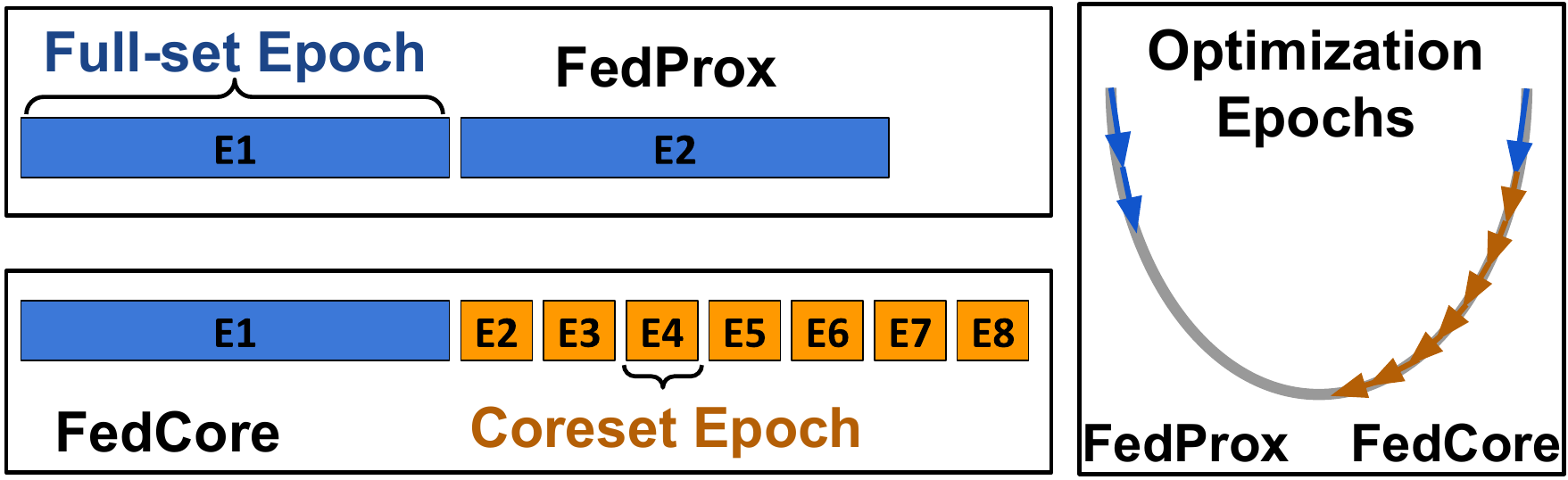}
 \caption{Faster \name{} convergence vs. \fedprox{}, due to more coreset-based gradient steps compared to \fedprox{}'s fewer epochs of full-set training.}
 \label{fig:core_vs_prox}
 \end{minipage}
\end{figure}

\paragraph{Stragglers Handling}

Figure  \ref{fig:violin_minist} presents the distribution of clients' round times for the  MNIST benchmark with 30\% stragglers. As the figure illustrates, \fedavg{}, which is oblivious to round deadlines, generates a tail distribution that can exceed 11 times the allotted training time for a round. In contrast, deadline-aware algorithms like \name{}, \fedavgds{}, and \fedprox{} consistently ensure that each training round is completed before the deadline. Interestingly, the \name{} distribution is more tightly clustered around the round deadline in comparison to \fedavgds{} and \fedprox{}, which signifies a more effective utilization of the allotted training time to accurately follow the gradient direction. Table \ref{tab:acc-time} shows that although \name{} requires slightly longer time than the other two deadline-aware algorithms, it successfully meets the deadline requirements and ultimately achieves the best model performance. 

As depicted in Figure \ref{fig:core_vs_prox}, \name{} takes advantage of coresets to perform more epochs of local optimization and deeper gradient exploration, as opposed to \fedprox{}'s fewer epochs of full-set training. This approach leads to a faster convergence rate and improved model accuracy, demonstrating the effectiveness of the \name{} algorithm in addressing the straggler problem in federated learning.
% \begin{wrapfigure}{r}{0.5\textwidth} 
%  \centering
%  \includegraphics[width=\linewidth,trim=5pt 0pt 0pt 65pt]{figures/core_vs_prox.pdf}
% \caption{Illustration: Faster \name{} convergence vs. \fedprox{}, due to more coreset-based gradient steps compared to fewer epochs of full-set training.}
%  \label{fig:core_vs_prox}
% \end{wrapfigure}

\vspace{-8pt}
\section{Concluding Remarks}\label{sec:conclusion}

In this paper, we introduce \name{}, an innovative algorithm addressing the straggler problem in federated learning using distributed coresets. \name{} effectively adapts to updated models and integrates coreset generation with minimal overhead, significantly outperforming traditional methods. Our comprehensive analysis and evaluation demonstrate that \name{} substantially reduces FL training time while maintaining high accuracy. With regards to \textbf{broader impacts}, this research pioneers the use of coreset methods in efficient federated learning, paving the way for more scalable and robust systems, especially in privacy-sensitive domains where data protection is vital.

\def\bibfont{\footnotesize}
\bibliographystyle{plainnat}
\bibliography{body/reference.bib}

\newpage
\appendix
\section{Convergence Analysis Proof}\label{app:analysis}

\subsection{Problem Settings and Notations}\label{subapp:notations}
\paragraph{Problem Set-up}
First recall the notations defined in section \ref{sec:preliminaries}. The federated learning problem is to solve
\begin{equation}
w_* = \argmin_{w\in\mathcal{W}}\mathcal{L}(w),
~~\text{where}~~\mathcal{L}(w) := \sum_{i\in U}p^i\mathcal{L}^i(w),
~~ \mathcal{L}^i(w) := \frac{1}{m^i}\sum_{j\in V^i}\mathcal{L}^i_j(w).
\label{eq:fl_problem_app}
\end{equation}
with $\mathcal{L}^i_j(w) := L(f(w, x^i_j), y^i_j)$ representing the empirical loss for each sample $(x^i_j, y^i_j)$ from the $i$-th client, under the model $f(w,\cdot)$. Here $|U|=n$ is the total number of clients, and $p^i$ is the weight of the $i$-th client, proportional to the size $m^i$ of its training set with $\sum_{i=1}^n p^i=1$.

The proposed federated learning algorithm \name{} consists of $R$ rounds, each of which contains $E$ epochs. We use the time index $t\in\{0,1,2,\cdots, E R\}$ to denote the time step of each epoch, where $t=0$ corresponds to the model initialization. Meanwhile, denote $w_t^i$ to be the model parameter of client $i$ at time step $t$. One typical round in \name{} is described as follows.

Let $t=(r-1)E$ be the beginning at the $r$-th round for some $r=1,2,\cdots,R$. The central server broadcasts the latest model, $w_{t}$, to all the devices:
$$w_t^i\longleftarrow w_t,\quad \forall i\in U.$$
After that, the central server selects a set $U_t$ of $K$ clients randomly from $U$, according to the sampling probabilities $p^i, i\in U$. The coreset is then constructed for each client $(S^{i,*}, \delta^{i,*}), i\in U_t$. Each client $i\in U_t$ performs local updates on its model $w_t^i$ for the remaining epochs in the current round, based on the data in its coreset $(S^{i,*}, \delta^{i,*})$:
\begin{equation}\label{eqn:one_epoch_update}
    w_{t+k+1}^{i} \longleftarrow w_{t+k}^{i}-\eta_{t+k} g_{t+k}^i,\text{ for } k=0,1, \cdots, E-1,
\end{equation}
where $\eta_{t+k}$ is the learning rate and $g_{t+k}^i$ is the gradient computed from $(S^{i,*}, \delta^{i,*})$:
\begin{equation}\label{eqn:grad_core}
g_{t+k}^i=\frac{1}{m^i}\sum_{j\in S^{i,*}}\delta^{i,*}_j\nabla\mathcal{L}^i_j(w_{t+k}^i).
\end{equation}
Finally, at the end of the $r$-th round, the server aggregates the local models $\left\{w_{t+E}^i\right\}_{i\in U_t}$ to produce the new global model $w_{t+E}$:
\begin{equation}\label{eqn:agg}
    w_{t+E}\longleftarrow \frac{1}{K}\sum_{i\in U_t}w_{t+E}^i.
\end{equation}

Note that the current update in Eq.\eqref{eqn:one_epoch_update} is written in the form of gradient descent (GD), meaning that the model will be updated once based on the full gradient computed from $(S^{i,*}, \delta^{i,*})$. In practice, however, within one epoch, the update in Eq.\eqref{eqn:one_epoch_update} is usually conducted sequentially using stochastic gradient descent (SGD): the entire coreset will be randomly split into several mini-batches, and the parameter will be updated on each mini-batch. In the following analysis, we focus on the gradient descent setting in Eq.\eqref{eqn:one_epoch_update} 
for the ease of presentation. Convergence guarantees for SGD updates can be established by using the similar arguments as in the proofs of our main results.

\paragraph{Notations} In subsequent analysis, we use 
\begin{equation}\label{eqn:G_t_i}
    G_t^i:=\nabla\mathcal{L}^i(w_t^i)=\frac{1}{m^i}\sum_{j\in V^i}\nabla\mathcal{L}^i_j(w_t^i)
\end{equation}
to denote the full gradient from the full-set $V^i$ of client $i$ at time $t$. And denote 
\begin{equation}\label{eqn:G_t}
    G_t=\sum_{i\in U}p^iG_t^i=\sum_{i\in U}p^i\nabla\mathcal{L}^i(w_t^i)
\end{equation}
as the full gradient of the population at time $t$. Meanwhile, denote 
\begin{equation}\label{eqn:g_t_i}
    g_{t}^i=\frac{1}{m^i}\sum_{j\in S^{i,*}}\delta^{i,*}_j\nabla\mathcal{L}^i_j(w_{t}^i),
\end{equation}
where $\left(\delta^{i,*}, S^{i,*}\right)$ is defined in Eq.\eqref{eq:opt_medoids}, as the gradient computed from the coreset of client $i$ at time $t$. And denote $g_{t}=\sum_{i\in U}p^i g_t^i$ as the population coreset gradient at time $t$.

In practice, within one round, only a subset of $K$ randomly selected clients will update their parameters, and the choices of clients vary each round. In order to facilitate the analysis under the random selection scheme, the following
thought trick is introduced to circumvent the difficulty: we assume that \name{} always activates \textbf{all} devices at the beginning of each round, while only aggregates parameters from those sampled devices an the end of one round. It is clear that this updating scheme is equivalent to the original. More specifically, the updating scheme in \name{} is given by, $\forall i \in U$,
\begin{equation}\label{eqn:v_w_update}
\begin{aligned}
v_{t+1}^{i} & = w_{t}^{i}-\eta_{t} g_t^i, \\
w_{t+1}^{i} & = \begin{cases}v_{t+1}^{i} & \text { if } t+1 \notin \mathcal{I}_{E}, \\
\frac{1}{K}\sum_{k \in U_t}v_{t+1}^{k} & \text { if } t+1 \in \mathcal{I}_{E} ,\end{cases}
\end{aligned}
\end{equation}
where $\mathcal{I}_{E}=\{r E \mid r=1,2,\cdots, R\}$ is the set of global synchronization steps, and $U_t$ is the set of $K$ selected clients at time $t$. An additional variable $v_{t+1}^i$ is introduced to represent the immediate result of one step GD update from $w_t^i$, and $w_t^i$ is the final model parameters maintained by client $i$ at time $t$, (possibly after the global synchronization). 

In addition, two virtual sequences are introduced in the subsequent analysis to denote the population-averaged model parameters, following the ideas from \cite{haddadpour2019convergence, li2019convergence, stich2019local}: 
\begin{equation}\label{eqn:bar_v_w}
    \overline{v}_{t}=\sum_{i\in U} p^{i} v_{t}^{i},\text{ and } \overline{w}_{t}=\sum_{i\in U} p^{i} w_{t}^{i},
\end{equation}
where $\overline{v}_{t+1}$ results from an single GD step of from $\overline{w}_{t}$:
\begin{equation}\label{eqn:w_from_v}
    \overline{v}_{t+1}=\overline{w}_{t}-\eta_{t} g_{t}.
\end{equation}

\subsection{Assumptions and Convergence Results}\label{subapp:assumptions}
The following are the detailed assumptions required for the convergence analysis.

\begin{Assumption}[$L$-smoothness]\label{ass:l_smooth}
    $\forall i\in U, \Lc^i$ is $L$-smooth: for all $v, w\in\Wc$, 
    $$\Lc^i(v) \leq \Lc^i(w)+(v-w)^\top \nabla \Lc^i(w)+\frac{L}{2}\|v-w\|_{2}^{2}.$$
\end{Assumption}

\begin{Assumption}[$\mu$-strong convexity]\label{ass:mu_convex}
    $\forall i\in U, L^i$ is $\mu$-strongly convex: for all $v, w\in\Wc$,
    $$\Lc^i(v) \geq \Lc^i(w)+(v-w)^\top \nabla \Lc^i(w)+\frac{\mu}{2}\|v-w\|_{2}^{2}.$$
\end{Assumption}

\begin{Assumption}[$\epsilon$-coreset]\label{ass:coreset}
    For any client $i$ and time step $t$, with probability one, the coreset gradient $g_t^i$ in Eq.\eqref{eqn:g_t_i} is an $\epsilon$-approximation to the full-set gradient $G_t^i$ in Eq.\eqref{eqn:G_t_i}:
    $$\left\|g_t^i - G_t^i\right\|\leq\epsilon, \quad \forall i\in U, \text{ and } t\in\{0,1,\cdots,ER\}, \quad\text{with probability one}.$$
\end{Assumption}

\begin{Assumption}[$D$-bounded gradient]\label{ass:bounded}
    For any client $i$ and time step $t$, with probability one, 2-norms of the coreset gradient $g_t^i$ in Eq.\eqref{eqn:g_t_i} and the full-set gradient $G_t^i$ in Eq.\eqref{eqn:G_t_i} are uniformly upper bounded by a constant $D>0$:
    $$\max\left\{\left\|g_t^i\right\|, \left\|G_t^i\right\|\right\}\leq D, \quad \forall i\in U, \text{ and } t\in\{0,1,\cdots,ER\}, \quad\text{with probability one}.$$
\end{Assumption}

\begin{Assumption}[$\Gamma$-heterogeneity]\label{ass:heter}
    Let $\Lc_{*}$ and $\Lc^{i}_{*}$ be the minimum values of $\Lc$ and $\Lc^i$, respectively. Assume there is a positive constant $\Gamma>0$ such that $\Gamma\geq\Lc_{*}-\sum_{i\in U} p^{i} \Lc^{i}_{*}$.
\end{Assumption}

\begin{Assumption}[Random sampling]\label{ass:sample}
    For any time step $t$, assume $U_{t}$ contains a subset of $K$ indices randomly selected with replacement according to the sampling probabilities $\left\{p^{i}\right\}_{i\in U}$. 
\end{Assumption}

\paragraph{Comments on Assumptions} Assumption \ref{ass:l_smooth} and \ref{ass:mu_convex} are standard assumptions in convex optimization \cite{boyd2004convex}; typical examples are linear/ridge regression, logistic regression, and regularized support vector machines. 
Assumption \ref{ass:coreset} characterizes the approximation capability of the coreset to the full-set, which is standard in the theoretical works on coreset-based gradient descent methods \cite{mirzasoleiman2020coresets, pooladzandi2022adaptive}. 
Assumption \ref{ass:bounded} on the bounded gradient is a widely adopted setting in the existing theoretical works for federated learning and coreset methods \cite{li2020federated, li2019convergence,  mirzasoleiman2020coresets}. 
Meanwhile, note that Assumptions \ref{ass:coreset} and \ref{ass:bounded} are presented in a probabilistic form to account for the potential randomness resulting from the coreset construction steps in \name{}.
Assumption \ref{ass:heter} quantifies the degree of heterogeneity among different clients. In the special case when data from all the clients are i.i.d., then $\Lc_{*}-\sum_{i\in U} p^{i} \Lc^{i}_{*}\to 0$ as the number of samples grows. 
Assumption \ref{ass:sample} assumes the $K$ clients are selected from the distribution $\left\{p^{i}\right\}_{i\in U}$ independently and with replacement, which is a common set-up in both theoretical and empirical works \cite{li2020federated, li2019convergence}. 

\paragraph{Randomness in \name{}} Note that randomness in \name{} can be attributed to three sources: client selection, coreset construction and model initialization $w_0$. Throughout the subsequent analysis and statements, unless otherwise specified, the expectation $\E[\cdot]$ is be taken over all three sources of randomness. Meanwhile, the notation $\E_{U_t}[\cdot]$ is also introduced to denote the expectation over the random client selection at time $t$, conditioned on the other sources of randomness.

\subsection{Proofs of Main Results}\label{subapp:theorem}
The convergence of \name{} is established by the following theorem, which can be considered as a more detailed version of Theorem \ref{thm:main_conv}.

\begin{Theorem}\label{thm:conv}
    Assume Assumptions \ref{ass:l_smooth}, \ref{ass:mu_convex}, \ref{ass:coreset}, \ref{ass:bounded}, \ref{ass:heter}, \ref{ass:sample} hold with constants $L, \mu, \epsilon, D, \Gamma$. Consider \name{} with $R$ rounds and each round contains $E$ epochs. For $t\in\{0,1,\cdots,ER\}$, set the learning rate 
    $$\eta_t=\frac{\alpha}{t+\beta},\quad\text{with } \alpha=\frac{2}{\mu}\text{ and } \beta=\max\left\{E,\frac{8L}{\mu}\right\}.$$ 
    The model $w_\text{out}$ output by \name{} after $R$ rounds of training satisfies
    \begin{equation}\label{eqn:conv}
        \E\left[\|w_\text{out}-w_*\|^2\right] \leq A_1 + \frac{A_2}{ER+\beta},
    \end{equation}
    where the constants $A_1$ and $A_2$ are given by:
    \begin{equation}\label{eqn:A1_A2}
        \begin{aligned}
        A_1&=\frac{2\epsilon D}{\mu^2},\\
        A_2&=\max\left\{\beta\E\left[\|w_0-w_*\|^2\right], \frac{4}{\mu^2}\left[\frac{4E^2D^2}{K}+8(E-1)^2D^2+6L\Gamma+\epsilon^2+2\epsilon D\right]\right\}.
    \end{aligned}
    \end{equation}
    Here $w_*=\argmin_{w\in\Wc}\Lc(w)$ as defined in Eq.\eqref{eq:fl_problem_app} and the expectation is taken over the randomness in client selection, coreset construction and model initialization $w_0$. Consequently,
    \begin{equation}\label{eqn:conv_value}
        \E\left[\Lc(w_\text{out})-\Lc(w_*)\right] \leq \frac{L}{2}\left(A_1 + \frac{A_2}{ER+\beta}\right).
    \end{equation}
    
\end{Theorem}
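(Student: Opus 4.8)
The plan is to control the averaged virtual sequence $\overline{w}_t$ from \eqref{eqn:bar_v_w} and derive a one-step contraction for $\Delta_t:=\E\|\overline{w}_t-w_*\|^2$, then solve the resulting recursion by induction. I would follow the template of strongly-convex \fedavg{} analyses (as in the cited works), but with the essential new ingredient of a \emph{biased} coreset gradient. Starting from $\overline{v}_{t+1}=\overline{w}_t-\eta_t g_t$ in \eqref{eqn:w_from_v}, I expand $\|\overline{v}_{t+1}-w_*\|^2$ and, crucially, split the population coreset gradient as $g_t=G_t+(g_t-G_t)$, where $G_t$ is the true population gradient \eqref{eqn:G_t}. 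This isolates a ``\fedavg{}-like'' part driven by $G_t$ from a ``bias'' part driven by $g_t-G_t$.

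The cross term against $G_t$ is handled by standard machinery: $\mu$-strong convexity (Assumption \ref{ass:mu_convex}) supplies the contraction factor $(1-\mu\eta_t)$, while $L$-smoothness (Assumption \ref{ass:l_smooth}) together with the heterogeneity constant $\Gamma$ (Assumption \ref{ass:heter}) controls the mismatch caused by $G_t$ being evaluated at the per-client iterates $w_t^i$ rather than at $\overline{w}_t$; this yields the $6L\Gamma$ contribution and a client-drift term $\sum_{i\in U}p^i\|\overline{w}_t-w_t^i\|^2$. The bias term $-2\eta_t\langle\overline{w}_t-w_*,\,g_t-G_t\rangle$ is where Assumption \ref{ass:coreset} enters: by Cauchy--Schwarz and $\|g_t-G_t\|\le\epsilon$, and by bounding $\|\overline{w}_t-w_*\|\le D/\mu$ through strong convexity and the $D$-bounded gradient (Assumption \ref{ass:bounded}), this term is at most $2\eta_t\epsilon D/\mu$. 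Note this is \emph{linear} in $\eta_t$, which is exactly what generates the non-vanishing floor $A_1=2\epsilon D/\mu^2$. The quadratic term $\eta_t^2\|g_t\|^2$ is bounded via the $D$-bound and, after the decomposition, leaves the residual $\epsilon^2+2\epsilon D$ inside the constant $B$.

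Two auxiliary lemmas are needed. First, a client-drift bound: because synchronization occurs only every $E$ epochs in \eqref{eqn:v_w_update} and all gradients are bounded by $D$, the iterates spread from their last common value by at most $(E-1)$ steps of size $\eta_t D$, giving $\E\sum_{i\in U}p^i\|\overline{w}_t-w_t^i\|^2$ of order $\eta_t^2(E-1)^2D^2$ (the source of $8(E-1)^2D^2$). Second, a sampling-variance bound: on aggregation steps $t+1\in\mathcal{I}_E$, $\overline{w}_{t+1}$ replaces $\overline{v}_{t+1}$ by an average over the $K$ clients sampled with replacement per Assumption \ref{ass:sample}; since this sampling is unbiased, $\E_{U_t}\|\overline{w}_{t+1}-\overline{v}_{t+1}\|^2$ is a variance shrinking like $1/K$, producing the $4E^2D^2/K$ term, whereas on non-aggregation steps $\overline{w}_{t+1}=\overline{v}_{t+1}$ and nothing is added.

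Collecting everything yields the master recursion $\Delta_{t+1}\le(1-\mu\eta_t)\Delta_t+\eta_t^2 B+\eta_t\cdot 2\epsilon D/\mu$, with $B$ the bracketed constant appearing in $A_2$ of \eqref{eqn:A1_A2}. With the diminishing step size $\eta_t=\tfrac{2}{\mu(t+\beta)}$ and $\beta=\max\{E,8L/\mu\}$, a standard induction establishes $\Delta_t\le A_1+\tfrac{A_2}{t+\beta}$: the $\eta_t\cdot 2\epsilon D/\mu$ term has fixed point $2\epsilon D/\mu^2=A_1$ under the contraction, while $\eta_t^2 B$ drives the $\Oc(1/t)$ decay. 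Evaluating at $t=ER$ gives \eqref{eqn:conv}, and the function-value bound \eqref{eqn:conv_value} follows from $L$-smoothness with $\nabla\Lc(w_*)=0$, i.e.\ $\Lc(w_\text{out})-\Lc(w_*)\le\tfrac{L}{2}\|w_\text{out}-w_*\|^2$. I expect the main obstacle to be the bias term: every cited \fedavg{} proof assumes unbiased local gradients, so the coreset bias must be threaded through the convexity estimate without collapsing the $(1-\mu\eta_t)$ contraction, and it must be arranged so that the residual appears as a linear-$\eta_t$ term (hence a constant floor $A_1$) rather than corrupting the contraction rate or forcing $\epsilon$ to vanish for convergence.
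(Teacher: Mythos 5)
Your proposal follows essentially the same route as the paper's proof: the same virtual-sequence recursion for $\E\|\overline{w}_t-w_*\|^2$, the same bias splitting $g_t = G_t + (g_t-G_t)$ with the bias term bounded via Cauchy--Schwarz and $\|\overline{w}_t-w_*\|\le D/\mu$ (yielding the linear-in-$\eta_t$ term and the floor $A_1=2\epsilon D/\mu^2$), the same client-drift and $1/K$ sampling-variance lemmas producing $8(E-1)^2D^2$ and $4E^2D^2/K$, and the same induction under $\eta_t=\tfrac{2}{\mu(t+\beta)}$ closed by $L$-smoothness. All key ingredients and constants match the paper's Lemmas H1--H3 and its main induction, so the approach is correct and not materially different.
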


The proof of Theorem \ref{thm:conv} is based on the following three key lemmas, whose proofs are deferred to \cref{subapp:theorem}.

\begin{Lemma}\label{lemma:H2}
    Under the setting of Theorem \ref{thm:conv}, for $t+1 \in \mathcal{I}_{E}=\{r E \mid r=1,2,\cdots, R\}$, the set of global synchronization steps,
    \begin{equation}\label{eqn:H2}
    \mathbb{E}_{U_{t}}\left[\overline{{w}}_{t+1}\right]=\overline{{v}}_{t+1}.
    \end{equation}
\end{Lemma}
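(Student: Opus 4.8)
The plan is to exploit the fact that, at a synchronization step $t+1\in\mathcal{I}_E$, the update rule in Eq.\eqref{eqn:v_w_update} assigns every client the \emph{same} aggregated parameter $\frac{1}{K}\sum_{k\in U_t}v_{t+1}^k$, independent of the client index $i$. First I would substitute this common value into the definition $\overline{w}_{t+1}=\sum_{i\in U}p^iw_{t+1}^i$ from Eq.\eqref{eqn:bar_v_w}. Since the aggregated quantity does not depend on $i$, it factors out of the weighted sum, and using $\sum_{i\in U}p^i=1$ the average collapses to
$$\overline{w}_{t+1}=\frac{1}{K}\sum_{k\in U_t}v_{t+1}^k.$$

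Next I would take the conditional expectation $\mathbb{E}_{U_t}[\cdot]$ over the random client selection. By Assumption \ref{ass:sample}, $U_t$ consists of $K$ indices drawn i.i.d.\ with replacement according to $\{p^i\}_{i\in U}$; writing $U_t=\{k_1,\dots,k_K\}$, linearity of expectation gives $\mathbb{E}_{U_t}[\overline{w}_{t+1}]=\frac{1}{K}\sum_{j=1}^K\mathbb{E}[v_{t+1}^{k_j}]$. The key point is that, conditioned on all other sources of randomness, each $v_{t+1}^i=w_t^i-\eta_t g_t^i$ is already determined \emph{before} the clients for the current round are drawn, so the $v_{t+1}^i$ are constants with respect to $\mathbb{E}_{U_t}$. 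Hence each draw satisfies $\mathbb{E}[v_{t+1}^{k_j}]=\sum_{i\in U}p^iv_{t+1}^i=\overline{v}_{t+1}$, and averaging $K$ identical terms yields $\mathbb{E}_{U_t}[\overline{w}_{t+1}]=\overline{v}_{t+1}$, as claimed.

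The only real obstacle is justifying the independence of $v_{t+1}^i$ from $U_t$, which is precisely what the ``activate all devices'' thought trick introduced before Eq.\eqref{eqn:v_w_update} is designed to secure: every client computes its local update $v_{t+1}^i$ as if it were participating, and the randomness of $U_t$ enters only through which of these pre-computed updates get averaged. Once this measurability point is made explicit, the remainder is the short unbiasedness calculation above. This lemma is the standard ``aggregation is unbiased for the $p^i$-weighted mean'' fact underlying \fedavg{}-style analyses, and I expect it to serve mainly as a building block feeding into the descent recursion for $\overline{w}_t$.
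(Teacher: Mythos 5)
Your proposal is correct and follows essentially the same route as the paper's proof: identify $\overline{w}_{t+1}$ with the aggregate $\frac{1}{K}\sum_{k\in U_t}v_{t+1}^k$, then apply linearity of expectation and Assumption \ref{ass:sample} (sampling with replacement according to $\{p^i\}_{i\in U}$) to conclude $\mathbb{E}_{U_t}[\overline{w}_{t+1}]=\sum_{i\in U}p^i v_{t+1}^i=\overline{v}_{t+1}$. Your explicit remarks on why the $p^i$-weighted average collapses (using $\sum_{i\in U}p^i=1$) and on the measurability of the $v_{t+1}^i$ with respect to $\mathbb{E}_{U_t}$ simply spell out details the paper leaves implicit.
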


\begin{Lemma}\label{lemma:H1}
    Under the setting of Theorem \ref{thm:conv}, for $t+1 \in \mathcal{I}_{E}=\{r E \mid r=1,2,\cdots, R\}$, the set of global synchronization steps, the expected difference between $\overline{{v}}_{t+1}$ and $\overline{{w}}_{t+1}$ is bounded by
    \begin{equation}\label{eqn:H1}
    \mathbb{E}_{U_{t}}\left[\left\|\overline{{v}}_{t+1}-\overline{{w}}_{t+1}\right\|^{2}\right] \leq \frac{4}{K} \eta_{t}^{2} E^{2} D^{2}.
    \end{equation}
\end{Lemma}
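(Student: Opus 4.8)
The plan is to recognize the left-hand side as the variance of the aggregated parameter $\overline{w}_{t+1}$ induced by the random client selection, and then to bound that variance by the accumulated local drift within the round. By Lemma \ref{lemma:H2} we have $\overline{v}_{t+1}=\mathbb{E}_{U_t}[\overline{w}_{t+1}]$, so the quantity to control is exactly $\mathbb{E}_{U_t}\big[\|\overline{w}_{t+1}-\mathbb{E}_{U_t}[\overline{w}_{t+1}]\|^2\big]$, i.e. the variance of $\overline{w}_{t+1}$ over the draw of $U_t$.

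First I would exploit the sampling scheme. By Assumption \ref{ass:sample}, at a synchronization step $U_t$ consists of $K$ indices drawn independently with replacement according to $\{p^i\}$, and at such a step Eq.\eqref{eqn:v_w_update} gives $\overline{w}_{t+1}=\frac{1}{K}\sum_{k\in U_t}v_{t+1}^k$ (every client receives the same aggregate and $\sum_i p^i=1$). Viewing this as the average of $K$ i.i.d. draws $v_{t+1}^{k_1},\dots,v_{t+1}^{k_K}$, each with common mean $\sum_i p^i v_{t+1}^i=\overline{v}_{t+1}$, the variance of the average equals $1/K$ times the per-draw variance, so
$$\mathbb{E}_{U_t}\big[\|\overline{v}_{t+1}-\overline{w}_{t+1}\|^2\big]=\frac{1}{K}\sum_{i\in U}p^i\,\|v_{t+1}^i-\overline{v}_{t+1}\|^2.$$

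Next I would bound the per-client deviation. Let $t_0=t-(E-1)$ be the first step of the current round; after the previous synchronization all clients share the common initialization $w_{t_0}$. Since the weighted mean minimizes the weighted sum of squared deviations, $\sum_i p^i\|v_{t+1}^i-\overline{v}_{t+1}\|^2\le\sum_i p^i\|v_{t+1}^i-w_{t_0}\|^2$. Because no intermediate synchronization occurs inside the round, unrolling Eq.\eqref{eqn:v_w_update} yields $v_{t+1}^i-w_{t_0}=-\sum_{k=0}^{E-1}\eta_{t_0+k}g_{t_0+k}^i$; applying Jensen/Cauchy--Schwarz gives $\|v_{t+1}^i-w_{t_0}\|^2\le E\sum_{k=0}^{E-1}\eta_{t_0+k}^2\|g_{t_0+k}^i\|^2$, and the bounded-gradient Assumption \ref{ass:bounded} supplies $\|g_{t_0+k}^i\|\le D$.

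The last ingredient, requiring the most care, is controlling $\sum_{k=0}^{E-1}\eta_{t_0+k}^2$ in terms of $\eta_t$. Using the explicit schedule $\eta_t=\alpha/(t+\beta)$ with $\beta\ge E$, the rate changes by at most a factor of two within one round, since $\eta_{t_0}/\eta_t=1+(E-1)/(t_0+\beta)<2$, whence $\eta_{t_0+k}\le 2\eta_t$ for all $k\in\{0,\dots,E-1\}$ and $\sum_{k=0}^{E-1}\eta_{t_0+k}^2\le 4E\eta_t^2$. Chaining these estimates gives $\|v_{t+1}^i-w_{t_0}\|^2\le 4E^2\eta_t^2 D^2$ uniformly in $i$, and substituting back produces the claimed bound $\frac{4}{K}\eta_t^2E^2D^2$. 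I expect the main obstacle to be making the variance decomposition rigorous—justifying that sampling with replacement reduces the cross-client variance by exactly the factor $1/K$—paired with the within-round learning-rate ratio bound; the remaining steps are routine.
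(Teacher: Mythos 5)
Your proposal is correct and follows essentially the same route as the paper's proof: the $1/K$ variance reduction from i.i.d.\ sampling with replacement (the paper expands the square and kills cross terms by independence, which is exactly your variance-of-the-mean formula), the recentering at the common round-start iterate $\overline{w}_{t_0}$ via the mean-minimizes-squared-deviations identity, Cauchy--Schwarz on the unrolled sum of $E$ gradient steps with the $D$-bounded gradient assumption, and the within-round learning-rate ratio bound $\eta_{t_0}\leq 2\eta_t$ guaranteed by $\beta\geq E$. No gaps; the steps you flagged as needing care are handled the same way in the paper.
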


\begin{Lemma}\label{lemma:H3}
    Under the setting of Theorem \ref{thm:conv}, for any time step $t+1 \in \{1,2,\cdots, ER\}$,
    \begin{equation}\label{eqn:H3}
        \mathbb{E}\left[\left\|\overline{{v}}_{t+1}-{w}_*\right\|^{2}\right] \leq\left(1-\eta_{t} \mu\right) \mathbb{E}\left[\left\|\overline{{w}}_{t}-{w}_*\right\|^{2}\right] + \eta_{t}\cdot A_3 + \eta_{t}^{2}\cdot A_4,
    \end{equation}
    where
    \begin{equation}\label{eqn:A3_A4}
        \begin{aligned}
        A_3=\frac{2\epsilon D}{\mu}, \quad A_4=8(E-1)^2D^2+6L\Gamma+\epsilon^2+2\epsilon D.
    \end{aligned}
    \end{equation}
\end{Lemma}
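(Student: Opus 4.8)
The plan is to establish \eqref{eqn:H3} as a one-step descent inequality for the population-averaged virtual iterate $\overline{v}_{t+1}$, adapting the classical \fedavg{} analysis of \cite{li2019convergence} but replacing its reliance on \emph{unbiased} stochastic gradients with the deterministic coreset-bias bound from Assumption~\ref{ass:coreset}. Starting from the single-step update $\overline{v}_{t+1}=\overline{w}_t-\eta_t g_t$ in \eqref{eqn:w_from_v}, I would expand $\norm{\overline{v}_{t+1}-w_*}^2 = \norm{\overline{w}_t - w_*}^2 - 2\eta_t\langle \overline{w}_t - w_*, g_t\rangle + \eta_t^2\norm{g_t}^2$ and immediately insert the full-set population gradient $G_t$ via the splitting $g_t = G_t + (g_t - G_t)$. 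The crucial observation is that, since $g_t-G_t=\sum_{i}p^i(g_t^i-G_t^i)$ with $\sum_i p^i = 1$, Assumption~\ref{ass:coreset} gives the \emph{deterministic} bound $\norm{g_t-G_t}\le\epsilon$ with probability one; this is what substitutes for the vanishing cross term that unbiased SGD would enjoy.

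Next I would treat the exact-gradient contribution. Rewriting the cross term as $-2\eta_t\sum_i p^i\langle \overline{w}_t - w_*, \nabla\Lc^i(w_t^i)\rangle$ and inserting $\pm w_t^i$, I split each summand into a progress part $\langle w_t^i - w_*, \nabla\Lc^i(w_t^i)\rangle$ and a drift part $\langle \overline{w}_t - w_t^i, \nabla\Lc^i(w_t^i)\rangle$. Strong convexity (Assumption~\ref{ass:mu_convex}) lower-bounds the progress part by $\Lc^i(w_t^i)-\Lc^i(w_*)+\tfrac{\mu}{2}\norm{w_t^i-w_*}^2$, and Jensen's inequality $\sum_i p^i\norm{w_t^i-w_*}^2\ge\norm{\overline{w}_t-w_*}^2$ produces the contraction factor $(1-\eta_t\mu)$. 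Smoothness (Assumption~\ref{ass:l_smooth}) controls the drift part, and combining it with the heterogeneity bound $\Gamma\ge\Lc_*-\sum_i p^i\Lc^i_*$ (Assumption~\ref{ass:heter}) collapses the residual loss-difference terms into the $6L\Gamma$ contribution. The remaining client-drift quantity $\sum_i p^i\norm{\overline{w}_t-w_t^i}^2$ is bounded by unrolling the at-most $E-1$ local gradient steps taken since the last synchronization and using $\norm{g_t^i}\le D$ (Assumption~\ref{ass:bounded}), which yields the $8(E-1)^2D^2$ term carrying an $\eta_t^2$ factor. Together with $\eta_t^2\norm{g_t}^2\le\eta_t^2 D^2$ absorbed appropriately, these account for all of $A_4$ apart from the bias pieces.

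Finally I would collect the bias terms generated by $(g_t-G_t)$. Grouping the cross term as $-2\eta_t\langle \overline{w}_t-w_*-\eta_t G_t,\, g_t-G_t\rangle$ and applying Cauchy--Schwarz with $\norm{g_t-G_t}\le\epsilon$ yields $2\eta_t\epsilon\norm{\overline{w}_t-w_*}+2\eta_t^2\epsilon\norm{G_t}$; using $\norm{G_t}\le D$ gives the $2\epsilon D$ inside $A_4$, while bounding $\norm{\overline{w}_t-w_*}\le D/\mu$ (from $\mu$-strong convexity of $\Lc$ at its minimizer $\nabla\Lc(w_*)=0$ together with the uniform gradient bound $D$) produces exactly the first-order term $\eta_t\cdot\tfrac{2\epsilon D}{\mu}=\eta_t A_3$. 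The leftover square $\eta_t^2\norm{g_t-G_t}^2\le\eta_t^2\epsilon^2$ supplies the final $\epsilon^2$ in $A_4$. Assembling the contraction factor, the $\eta_t A_3$ bias term, and the $\eta_t^2 A_4$ term, then taking expectations over all sources of randomness, gives \eqref{eqn:H3}.

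The main obstacle will be handling the coreset bias cleanly. Because $\E[g_t]\ne G_t$, the cross term does not vanish as it does for unbiased SGD, so an extra first-order-in-$\eta_t$ term is unavoidable; the delicate point is to route it into the benign additive constant $A_3$ rather than letting $2\eta_t\epsilon\norm{\overline{w}_t-w_*}$ corrupt the $(1-\eta_t\mu)$ contraction, which is precisely what the a priori diameter bound $\norm{\overline{w}_t-w_*}\le D/\mu$ accomplishes. Verifying that this diameter bound is legitimate under Assumptions~\ref{ass:mu_convex} and~\ref{ass:bounded}, and tracking the numerous constants so they aggregate exactly into $A_3$ and $A_4$, is where most of the care is required.
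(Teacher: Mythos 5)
Your proposal is correct and follows essentially the same route as the paper's proof: the same one-step expansion of $\|\overline{v}_{t+1}-w_*\|^2$, the same diameter bound $\|\overline{w}_t-w_*\|\leq D/\mu$ (from $\mu$-strong convexity and the gradient bound) that routes the first-order bias into $A_3$, the same strong-convexity-plus-Jensen contraction, the same invocation of the $6L\Gamma$ combination from Lemma 1 of \cite{li2019convergence}, and the same $(E-1)$-step drift unrolling yielding $8(E-1)^2D^2$, with only a cosmetic difference in that you split $g_t=G_t+(g_t-G_t)$ at the population level so that $\epsilon^2+2\epsilon D$ emerge from the regrouped cross and quadratic bias terms, whereas the paper extracts them per client inside $\eta_t^2\|g_t\|^2$. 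One small slip to fix: the remark ``$\eta_t^2\|g_t\|^2\leq\eta_t^2D^2$'' is inconsistent with your own grouping (under which the quadratic exact-gradient term is $\eta_t^2\|G_t\|^2$) and, taken literally, would leave a bare $D^2$ that cannot be absorbed into $A_4$; that term must instead be bounded via smoothness, $\|G_t^i\|^2\leq 2L(\Lc^i(w_t^i)-\Lc^i_*)$, so it can be cancelled against the loss-difference terms in the $6L\Gamma$ argument, exactly as you describe elsewhere.
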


\begin{proof}[Proof of Theorem \ref{thm:conv}]
First note the following decomposition:
\begin{align}\label{eqn:main_decomp}
\left\|\overline{{w}}_{t+1}-{w}_{*}\right\|^{2} & =\left\|\overline{{w}}_{t+1}-\overline{{v}}_{t+1}+\overline{{v}}_{t+1}-{w}_{*}\right\|^{2} \notag\\
& =\underbrace{\left\|\overline{{w}}_{t+1}-\overline{{v}}_{t+1}\right\|^{2}}_{H_{1}} + \underbrace{2\left\langle\overline{{w}}_{t+1}-\overline{{v}}_{t+1}, \overline{{v}}_{t+1}-{w}_{*}\right\rangle}_{H_{2}} + \underbrace{\left\|\overline{{v}}_{t+1}-{w}_{*}\right\|^{2}}_{H_{3}}.
\end{align}
For the first term $H_1$ in Eq.\eqref{eqn:main_decomp}, note that when $t+1 \notin \mathcal{I}_{E}$, we have $\overline{{v}}_{t+1}=\overline{{w}}_{t+1}$, and $H_1$ vanishes. Additionally, if $t+1 \in \mathcal{I}_{E}$, then the expectation of $H_1$ is bounded by Lemma \ref{lemma:H1}: 

For the second term $H_2$ in Eq.\eqref{eqn:main_decomp}, when $t+1 \notin \mathcal{I}_{E}$, $H_2$ vanishes since $\overline{{v}}_{t+1}=\overline{{w}}_{t+1}$. Additionally, when $t+1 \in \mathcal{I}_{E}$, $H_{2}$ vanishes under the expectation $\E_{U_t}[\cdot]$, due to the unbiasedness of $\overline{w}_{t+1}$ stated in Lemma \ref{lemma:H2}.

For the third term $H_3$ in Eq.\eqref{eqn:main_decomp}, its expectation is bounded by Lemma \ref{lemma:H3} for any time step $t+1 \in \{1,2,\cdots, ER\}$.

Overall, combining the bounds on $H_1$, $H_2$ and $H_3$ together, we have for any $t+1 \in \{1,2,\cdots, ER\}$,
\begin{equation}\label{eqn:recur}
    \mathbb{E}\left[\left\|\overline{{w}}_{t+1}-{w}_*\right\|^{2}\right] \leq\left(1-\eta_{t} \mu\right) \mathbb{E}\left[\left\|\overline{{w}}_{t}-{w}_*\right\|^{2}\right] + \eta_{t}\cdot A_3 + \eta_{t}^{2}\cdot \left(\frac{4E^2D^2}{K} + A_4\right),
\end{equation}
where $A_3, A_4$ are defined in Eq.\eqref{eqn:A3_A4}. For simplicity, denote 
\begin{equation}\label{eqn:A_5}
    A_5;=\frac{4E^2D^2}{K} + A_4.
\end{equation}

Now we will prove by induction that under the diminishing step size $\eta_t=\frac{\alpha}{t+\beta}$ with $\alpha=\frac{2}{\mu}$ and $\beta=\max\left\{E,\frac{8L}{\mu}\right\}$, for any time step $t \in \{0,1,\cdots,ER\}$,
\begin{equation}\label{eqn:induction}
    \E\left[\|\overline{{w}}_{t}-w_*\|^2\right] \leq A_1 + \frac{A_2}{t+\beta},
\end{equation}
where $A_1, A_2$ are defined in Eq.\eqref{eqn:A1_A2}. 

First, note that the definition of $A_2$ in Eq.\eqref{eqn:A1_A2} ensures that Eq.\eqref{eqn:induction} holds for $t=0$. Assume Eq.\eqref{eqn:induction} holds for some time step $t$. Then for time step $t+1$, by Eq.\eqref{eqn:recur}, we have
\begin{align}\label{eqn:induction_derive_1}
    \mathbb{E}\left[\left\|\overline{{w}}_{t+1}-{w}_*\right\|^{2}\right] &\leq \left(1 - \frac{\alpha\mu}{t+\beta}\right) \cdot \left(A_1 + \frac{A_2}{t+\beta}\right) + \frac{\alpha}{t+\beta} \cdot A_3 + \left(\frac{\alpha}{t+\beta}\right)^{2} \cdot A_5,\notag\\
    &=A_1 + \left(1 - \frac{\alpha\mu}{t+\beta}\right) \cdot \frac{A_2}{t+\beta} + \left(\frac{\alpha}{t+\beta}\right)^{2} \cdot A_5 + \frac{\alpha(A_3 - \mu A_1)}{t+\beta}
\end{align}
Note that by the definitions of $A_1$ in Eq.\eqref{eqn:A1_A2} and $A_3$ in Eq.\eqref{eqn:A3_A4}, 
\begin{equation}\label{eqn:A_3=A_1}
    A_3 = \mu A_1.
\end{equation}
Meanwhile, 
\begin{align}\label{eqn:induction_derive_2}
    \left(1 - \frac{\alpha\mu}{t+\beta}\right) \cdot \frac{A_2}{t+\beta} + \left(\frac{\alpha}{t+\beta}\right)^{2} \cdot A_5 &= \frac{(t+\beta-1)A_2}{(t+\beta)^{2}} + \left[\frac{\alpha^{2} A_5}{(t+\beta)^{2}}-\frac{(\alpha \mu-1)A_2}{(t+\beta)^{2}}\right]\notag\\
    &\leq\frac{A_2}{t+\beta+1} + \left[\frac{\alpha^{2} A_5}{(t+\beta)^{2}}-\frac{(\alpha \mu-1)A_2}{(t+\beta)^{2}}\right]\notag\\
    &=\frac{A_2}{t+\beta+1} + \frac{1}{(t+\beta)^{2}}\left[\frac{4A_5}{\mu^2}-A_2\right]\notag\\
    &\leq\frac{A_2}{t+\beta+1}.
\end{align}
Here the second equality in Eq.\eqref{eqn:induction_derive_2} is due to the fact that $\alpha=\frac{2}{\mu}$, and the second inequality in Eq.\eqref{eqn:induction_derive_2} comes from the fact that $A_2\geq\frac{4A_5}{\mu^2}$, which is a direct consequence of the definitions of $A_2$ in Eq.\eqref{eqn:A1_A2} and $A_5$ in Eq.\eqref{eqn:A_5}.

Plugging Eq.\eqref{eqn:A_3=A_1} and Eq.\eqref{eqn:induction_derive_2} into Eq.\eqref{eqn:induction_derive_1} completes the proof of the induction hypothesis in Eq.\eqref{eqn:induction}. Specifically, the model $w_\text{out}=\overline{{w}}_{ER}$ output by \name{} after $R$ rounds satisfies Eq.\eqref{eqn:conv}.

Furthermore, by the $L$-smoothness of $\Lc$ (Assumption \ref{ass:l_smooth}),
\begin{equation}
    \E\left[\Lc(w_\text{out})-\Lc(w_*)\right] \leq \frac{L}{2}\cdot \E\left[\|w_\text{out}-w_*\|^2\right] \leq \frac{L}{2}\left(A_1 + \frac{A_2}{ER+\beta}\right).
\end{equation}
\end{proof}

\subsection{Proofs of Lemmas}\label{subapp:lemma}

\begin{proof}[Proof of Lemma \ref{lemma:H2}]
This lemma is a direct consequence of Assumption \ref{ass:sample}. More specifically, for $t+1 \in \mathcal{I}_{E}=\{r E \mid r=1,2,\cdots, R\}$,
$$\mathbb{E}_{U_{t}}\left[\overline{{w}}_{t+1}\right]=\mathbb{E}_{U_{t}}\left[\frac{1}{K}\sum_{k \in U_t}v_{t+1}^{k}\right]=\frac{1}{K}\cdot K \cdot \mathbb{E}_{k\in U_{t}}\left[v_{t+1}^{k}\right]=\sum_{i\in U}p^i v^i_{t+1}=\overline{v}_{t+1},$$
where the second equality comes from the linearity of expectation, and the third equality is due to Assumption \ref{ass:sample}.
\end{proof}

\begin{proof}[Proof of Lemma \ref{lemma:H1}]
Lemma \ref{lemma:H1} is a direct consequence of Lemma 5 in \cite{li2019convergence}. The proof is outlined as follows.

For $t+1 \in \mathcal{I}_{E}=\{r E \mid r=1,2,\cdots, R\}$, $\overline{w}_{t+1}=\frac{1}{K} \sum_{k\in U_t} v_{t+1}^{k}$. Taking expectation over $U_t$,
\begin{align}\label{eqn:H1_bound_1}
    \mathbb{E}_{U_{t}}\left[\left\|\overline{w}_{t+1}-\overline{v}_{t+1}\right\|^{2}\right]=\mathbb{E}_{U_{t}}\left[\frac{1}{K^{2}}\sum_{k\in U_t}\left\|v_{t+1}^{k}-\overline{v}_{t+1}\right\|^{2}\right]&=\frac{1}{K}\mathbb{E}_{k\in U_{t}}\left[\left\|v_{t+1}^{k}-\overline{v}_{t+1}\right\|^{2}\right]\notag\\
    &=\frac{1}{K} \sum_{i\in U} p^{i}\left\|v_{t+1}^{i}-\overline{v}_{t+1}\right\|^{2}
\end{align}
where the first equality follows from Assumption \ref{ass:sample} that $\left\{v_{t+1}^{k}\right\}_{k\in U_t}$ are independent and unbiased with $\mathbb{E}_{k\in U_{t}}\left[v_{t+1}^{k}\right]=\overline{v}_{t+1}$.

To bound Eq.\eqref{eqn:H1_bound_1}, first note that since $t+1 \in \mathcal{I}_{E}$, $t_{0}:=t+1-E \in \mathcal{I}_{E}$ is also a synchronization time, which implies $\left\{w_{t_{0}}^{i}\right\}_{i\in U}$ is identical. Then,
\begin{align}\label{eqn:H1_bound_2}
\sum_{i\in U} p^{i}\left\|v_{t+1}^{i}-\overline{v}_{t+1}\right\|^{2} & =\sum_{i\in U} p^{i}\left\|\left(v_{t+1}^{i}-\overline{w}_{t_{0}}\right)-\left(\overline{v}_{t+1}-\overline{w}_{t_{0}}\right)\right\|^{2}\notag \\
&=\left(\sum_{i\in U} p^{i}\left\|v_{t+1}^{i}-\overline{w}_{t_{0}}\right\|^{2}\right) - \left\|\overline{v}_{t+1}-\overline{w}_{t_{0}}\right\|^{2} \leq \sum_{i\in U} p^{i}\left\|v_{t+1}^{i}-\overline{w}_{t_{0}}\right\|^{2},
\end{align}
where the second equality results from $\sum_{i\in U} p^{i}\left(v_{t+1}^{i}-\overline{w}_{t_{0}}\right)=\overline{v}_{t+1}-\overline{w}_{t_{0}}$. Combining Eq.\eqref{eqn:H1_bound_1} and Eq.\eqref{eqn:H1_bound_2}, we have
\begin{align}\label{eqn:H1_bound_3}
\mathbb{E}_{U_{t}}\left[\left\|\overline{w}_{t+1}-\overline{v}_{t+1}\right\|^{2}\right] &\leq \frac{1}{K} \sum_{i\in U} p^{i}\left\|v_{t+1}^{i}-\overline{w}_{t_{0}}\right\|^{2} = \frac{1}{K} \sum_{i\in U} p^{i} \left\|v_{t+1}^{i}-{w}^i_{t_{0}}\right\|^{2}\notag\\
&= \frac{1}{K} \sum_{i\in U} p^{i} \left\|\sum_{\tau=t_{0}}^{t}\eta_{\tau} g_\tau^i\right\|^{2}
\leq \frac{1}{K} \sum_{i\in U} p^{i} E \sum_{\tau=t_{0}}^{t} \left\|\eta_{\tau} g_\tau^i\right\|^{2}\notag\\
&\leq \frac{1}{K} E \sum_{\tau=t_{0}}^{t} \eta_{\tau}^{2}D^2 \leq \frac{1}{K} E^{2} \eta_{t_{0}}^{2} D^{2} \leq \frac{4}{K} \eta_{t}^{2} E^{2} D^{2}.
\end{align}
Here, the second inequality in Eq. \eqref{eqn:H1_bound_3} follows from the Cauchy-Schwarz inequality. The third inequality is a result of Assumption \ref{ass:bounded}. The fourth inequality is justified by the fact that $\eta_{t}=\frac{\alpha}{t+\beta}$ is non-increasing. Lastly, the last inequality holds since, by definition, $\beta=\max\left\{E,\frac{8L}{\mu}\right\}\geq E$, and therefore $\eta_{t_{0}} \leq 2 \eta_{t_{0}+E-1}$.
\end{proof}

\begin{proof}[Proof of Lemma \ref{lemma:H3}] First, by Eq.\eqref{eqn:w_from_v}, we have
\begin{equation}\label{eqn:F1_F2}
    \left\|\overline{{v}}_{t+1}-{w}_*\right\|^{2}=\left\|\overline{{w}}_{t}-{w}_*-\eta_{t} {g}_{t}\right\|^{2}=\left\|\overline{{w}}_{t}-{w}_*\right\|^{2} + \underbrace{\eta_{t}^{2}\left\|g_{t}\right\|^{2}}_{F_{1}} \underbrace{-2 \eta_{t}\left\langle\overline{{w}}_{t}-{w}_*, g_{t}\right\rangle}_{F_{2}}
\end{equation}

To bound $F_1$ in Eq.\eqref{eqn:F1_F2}, note that
\begin{align}\label{eqn:F1_1}
F_{1} & = \eta_{t}^{2}\left\|g_{t}\right\|^{2} = \eta_{t}^{2}\left\|\sum_{i\in U}p^i g^i_{t}\right\|^{2} \leq  \eta_{t}^{2} \sum_{i\in U}p^i\left\|g^i_{t}\right\|^{2} = \eta_{t}^{2}\sum_{i\in U}p^i\left\|G^i_t + g^i_t - G^i_t\right\|^{2}\notag\\
& = \eta_{t}^{2} \left(\sum_{i\in U}p^i\left\|G^i_t - g^i_t\right\|^{2} + 2\sum_{i\in U}p^i\left\langle G^i_t, g^i_t - G^i_t\right\rangle + \sum_{i\in U}p^i\left\|G^i_t\right\|^{2}\right)\notag\\
& = \eta_{t}^{2} \left(\sum_{i\in U}p^i\left\|G^i_t - g^i_t\right\|^{2} + 2\sum_{i\in U}p^i\left\|G^i_t\right\|\left\|g^i_t - G^i_t\right\| + \sum_{i\in U}p^i\left\|G^i_t\right\|^{2}\right)\notag\\
& \leq \eta_{t}^{2} \left(\epsilon^2 + 2\epsilon D + \sum_{i\in U}p^i\left\|G^i_t\right\|^{2}\right)\notag\\
& \leq \eta_{t}^{2} \left(\epsilon^2 + 2\epsilon D + 2L\sum_{i\in U}p^i\left(\Lc^i(w^i_t)-\Lc^i_*\right)\right).
\end{align}
Here the first inequality in Eq.\eqref{eqn:F1_1} is due to the convexity of $\|\cdot\|^2$. The second inequality comes from Assumption \ref{ass:bounded} and Assumption \ref{ass:coreset}. The last inequality follows from the fact that for $L$-smooth $\Lc^i$ (Assumption \ref{ass:l_smooth}), 
\begin{equation}\label{eqn:l_smooth}
    \left\|G^i_t\right\|^{2}\leq 2L\left(\Lc^i(w^i_t)-\Lc^i_*\right).
\end{equation}

To bound $F_2$ in Eq.\eqref{eqn:F1_F2}, note that
\begin{align}\label{eqn:F2_1}
F_{2} & =-2\eta_{t}\left\langle\overline{{w}}_{t}-{w}^{\star}, g_{t}\right\rangle = -2\eta_{t} \sum_{i\in U} p^{i}\left\langle\overline{{w}}_{t}-{w}_*, g_t^i\right\rangle\notag\\
& =-2\eta_{t} \sum_{i\in U} p^{i}\left\langle\overline{{w}}_{t}-{w}_{t}^{i}, G_t^i\right\rangle - 2\eta_{t} \sum_{i\in U} p^{i}\left\langle{w}_{t}^{i}-{w}_*, G_t^i\right\rangle + 2\eta_{t} \sum_{i\in U} p^{i}\left\langle\overline{{w}}_{t}-{w}_*, G_t^i - g_t^i\right\rangle\notag\\
& \leq -2\eta_{t} \sum_{i\in U} p^{i}\left\langle\overline{{w}}_{t}-{w}_{t}^{i}, G_t^i\right\rangle - 2\eta_{t} \sum_{i\in U} p^{i}\left\langle{w}_{t}^{i}-{w}_*, G_t^i\right\rangle + 2\eta_{t}\epsilon\cdot\left\|\overline{{w}}_{t}-{w}_*\right\|
\end{align}
Here the last inequality in Eq.\eqref{eqn:F2_1} is due to the Cauchy-Schwarz inequality and Assumption \ref{ass:coreset}. Moreover, by the Cauchy-Schwarz inequality and the AM-GM inequality, 
\begin{align}\label{eqn:F2_2}
-2\left\langle\overline{{w}}_{t}-{w}_{t}^{i}, G_t^i\right\rangle \leq \frac{1}{\eta_{t}}\left\|\overline{{w}}_{t}-{w}_{t}^{i}\right\|^{2}+\eta_{t}\left\|G_t^i\right\|^{2} \leq \frac{1}{\eta_{t}}\left\|\overline{{w}}_{t}-{w}_{t}^{i}\right\|^{2}+2L\eta_{t}\left(\Lc^i(w^i_t)-\Lc^i_*\right),
\end{align}
where the last inequality in Eq.\eqref{eqn:F2_2} follows from Eq.\eqref{eqn:l_smooth}. In addition, by the $\mu$-strong convexity of $\Lc^{i}$, Assumption \ref{ass:mu_convex},
\begin{align}\label{eqn:F2_3}
-2\left\langle{w}_{t}^{i}-w_*, G_t^i\right\rangle \leq -\left(\Lc^{i}\left({w}_{t}^{i}\right)-\Lc^{i}\left({w}_{*}\right)\right)-\frac{\mu}{2}\left\|{w}_{t}^{i}-w_*\right\|^{2}.
\end{align}
The $\mu$-strong convexity of $\Lc$, together with the optimality of $w_*$, also implies that
\begin{align}\label{eqn:F2_4}
\left\|\overline{{w}}_{t}-{w}_*\right\|\leq\frac{1}{\mu}\left\|\nabla\Lc(\overline{{w}}_{t})-\nabla\Lc({w}_*)\right\|=\frac{1}{\mu}\left\|\nabla\Lc(\overline{{w}}_{t})\right\|\leq\frac{D}{\mu},
\end{align}
where the last inequality comes from Assumption \ref{ass:bounded}.

Now combining Eq.\eqref{eqn:F1_F2} with Eq.\eqref{eqn:F1_1}, Eq.\eqref{eqn:F2_1}, Eq.\eqref{eqn:F2_2}, Eq.\eqref{eqn:F2_3}, and Eq.\eqref{eqn:F2_4}, it follows that
\begin{align}\label{eqn:F3_F4}
    \left\|\overline{{v}}_{t+1}-{w}_*\right\|^{2} &\leq \underbrace{\left\|\overline{{w}}_{t}-{w}_*\right\|^{2} - \eta_{t}\mu\sum_{i\in U}p^i\left\|{w}_{t}^{i}-w_*\right\|^{2}}_{F_3} + \sum_{i\in U}p^i\left\|\overline{{w}}_{t}-{w}_{t}^{i}\right\|^{2}\notag\\
    &\quad+ \underbrace{4L\eta_t^2\sum_{i\in U}p^i\left(\Lc^i(w^i_t)-\Lc^i_*\right) - 2\eta_t\sum_{i\in U}p^i\left(\Lc^i(w^i_t)-\Lc^i(w_*)\right)}_{F_4}\notag\\
    &\quad+ \eta_t\cdot\frac{2\epsilon D}{\mu} + \eta_t^2\cdot\left(\epsilon^2+2\epsilon D\right)
\end{align}

To bound $F_3$ in Eq.\eqref{eqn:F3_F4}, it follows by the convexity of $\|\cdot\|^2$ that
\begin{equation}\label{eqn:F3}
    F_3 \leq \left\|\overline{{w}}_{t}-{w}_*\right\|^{2} - \eta_{t}\mu\left\|\sum_{i\in U}p^i\left({w}_{t}^{i}-w_*\right)\right\|^{2}=\left(1-\eta_{t} \mu\right) \left\|\overline{{w}}_{t}-{w}_*\right\|^{2}.
\end{equation}
Meanwhile, it is shown by Lemma 1 of \cite{li2019convergence} that $F_4$ in Eq.\eqref{eqn:F3_F4} is bounded by
\begin{equation}\label{eqn:F4}
    F_4 \leq \eta_{t}^{2}\cdot6L\Gamma + \sum_{i\in U}p^i\left\|\overline{{w}}_{t}-{w}_{t}^{i}\right\|^{2}.
\end{equation}
By combining Eq.\eqref{eqn:F3_F4} with Eq.\eqref{eqn:F3} and Eq.\eqref{eqn:F4}, it follows that
\begin{align}\label{eqn:F5}
    \left\|\overline{{v}}_{t+1}-{w}_*\right\|^{2} &\leq \left(1-\eta_{t} \mu\right) \left\|\overline{{w}}_{t}-{w}_*\right\|^{2} + \eta_t\cdot\frac{2\epsilon D}{\mu} + \eta_t^2\cdot\left(6L\Gamma+\epsilon^2+2\epsilon D\right)\notag\\
    &\quad+ 2\underbrace{\sum_{i\in U}p^i\left\|\overline{{w}}_{t}-{w}_{t}^{i}\right\|^{2}}_{F_5}.
\end{align}
Finally, to bound $F_5$ in Eq.\eqref{eqn:F5}, one can apply the same argument used in bounding Eq.\eqref{eqn:H1_bound_1}. More specifically, for any $t$, there exists a $t_0 \leq t$ such that $t-t_0 \leq E-1$ and ${w}_{t_0}^{i}=\overline{w}_{t_0}$ for all $i \in U$. Then, by following the same arguments as in Eq.\eqref{eqn:H1_bound_2} and Eq.\eqref{eqn:H1_bound_3}, it is easy to verify that:
\begin{align}\label{eqn:F5_bound}
    F_5\leq \eta_{t}^{2}\cdot 4(E-1)^2D^2.
\end{align}
By plugging Eq.\eqref{eqn:F5_bound} into Eq.\eqref{eqn:F5}, we complete the proof of Lemma \ref{lemma:H3}.
\end{proof}

\newpage
\section{Evaluation Details and Extra Results}\label{app:evaluation}

\subsection{Experimental Harware and Hyper-parameters}
In our evaluations, we utilize a physical server equipped with an Intel Core X Series Core i9 10920X CPU \cite{IntelCorei910920X} and a NVIDIA GeForce RTX 2080 Ti GPU \cite{Nvidia2080Ti}. The server runs on the Linux Ubuntu 20.04 operating system. The hyper-parameters used in our evaluations are detailed in Table \ref{table: hyper-params}.

\begin{table}[ht]
  \caption{Hyper-parameters}
  \centering
  \begin{tabular}{lccc}
    \toprule
    Hyper-parameters & MNIST & Shakespeare & Synthetic \\
    \midrule
    Optimizer           & SGD & SGD & SGD \\
    Learning Rate       & 0.03 & 0.03 & 0.001 \\
    Batch Size          & 8 & 8 & 8 \\
    Local Epoch         & 10 & 10 & 10 \\
    Communication Round & 100 & 30 & 100 \\
    Number of Clients & 1000 & 143 & 30 \\
    Number of Clients per Round & 100 & 10 & 10 \\
    $\mu$ in \fedprox{} & 0.1 & 0.001 & 0.1 \\
    \bottomrule
    \label{table: hyper-params}
  \end{tabular}
\end{table}

\subsection{Extra Evaluation Results}
\begin{figure}[ht]
    \centering
    \includegraphics[width=\linewidth]{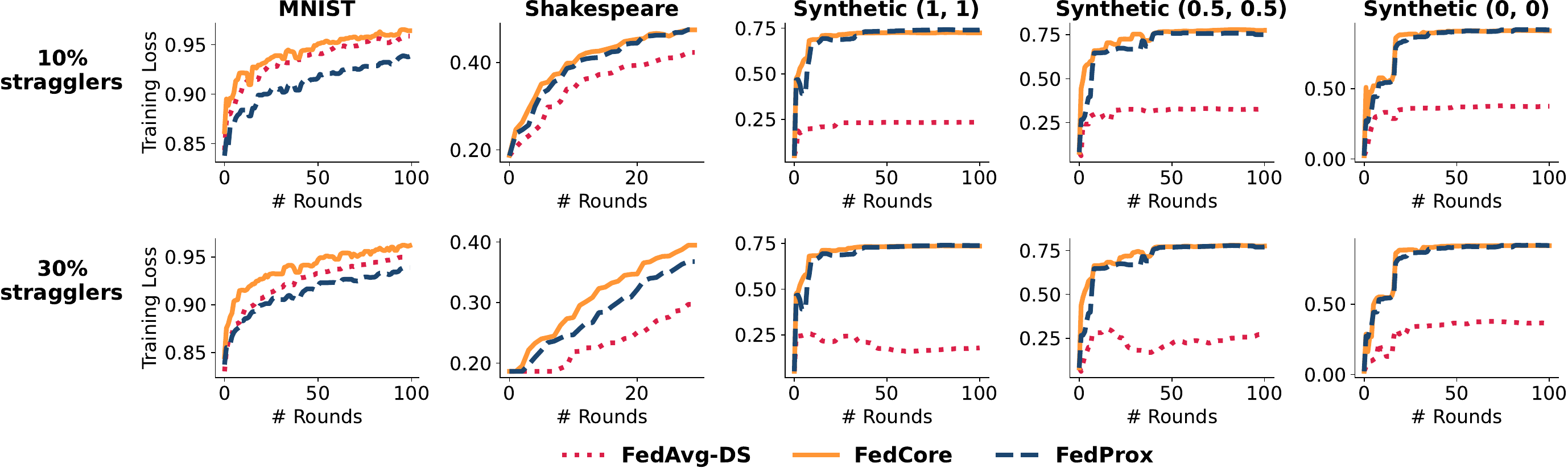}
    \caption{The test accuracy curves for \fedavgds{}, \name{}, and \fedprox{} at 10\% and 30\% stragglers.}
    \label{fig:test_acc}
\end{figure}

\begin{figure}[ht]
    \centering
    \includegraphics[width=\linewidth]{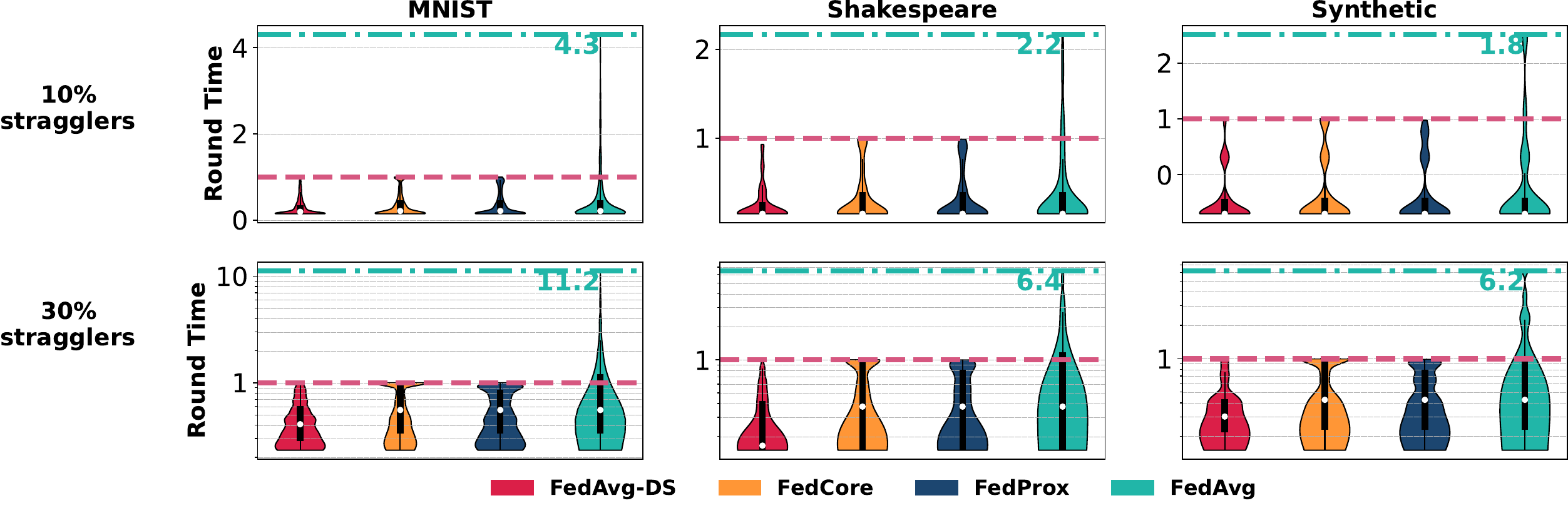}
    \caption{Round duration distribution across all clients, training rounds, and three benchmarks at 10\% and 30\% straggler settings, utilizing a log-scale y-axis for better visualization of the 30\% straggler scenario.}
    \label{fig:violin_all}
\end{figure}

\end{document}